\newcommand{\tR}{\tilde{\mathbf{R}}}
\newcommand{\tT}{\tilde{\mathbf{T}}}
\newcommand{\tN}{\tilde{\mathbf{N}}}
\newcommand{\tRR}{\tilde{R}}
\newcommand{\tTT}{\tilde{T}}
\newcommand{\tNN}{\tilde{N}}
\newcommand{\bR}{\bar{R}}
\newcommand{\bT}{\bar{T}}
\newcommand{\bN}{\bar{N}}
\newcommand{\tH}{\tilde{\mathbf{H}}}
\newcommand{\tE}{\tilde{\mathbf{E}}}
\newcommand{\ta}{\tilde{\mathbf{a}}}
\newcommand{\te}{\tilde{\mathbf{e}}}
\newcommand{\tf}{\tilde{\mathbf{f}}}
\newcommand{\tthet}{\tilde{\mathbf{\theta}}}
\newcommand{\ti}{\tilde{\mathbf{i}}}
\newcommand{\tOmega}{\tilde{\mathbf{\Omega}}}
\newcommand{\tomega}{\tilde{\mathbf{\omega}}}
\theoremstyle{plain}
\newtheorem{theorem}{Theorem}[section]
\newtheorem{lemma}[theorem]{Lemma}
\theoremstyle{definition}
\theoremstyle{remark}
\begin{document}

\title{Stochastic Gauss Equations}
\author{Fr\'ed\'eric Pierret}
\address{SYRTE UMR CNRS 8630, Observatoire de Paris, 77 avenue Denfert-Rochereau, 75014 Paris, France}
\keywords{N-Body Problems \and Planetary Systems \and Perturbation Methods}
\begin{abstract}
We derive the equations of celestial mechanics governing the variations of the orbital elements under a stochastic perturbation generalizing the classical Gauss equations. Explicit formulas are given for the semi-major axis, the eccentricity, the inclination, the longitude of the ascending node, the pericenter angle and the mean anomaly which are express in term of the angular momentum vector $\textbf{H}$ per unit of mass and the energy $E$ per unit of mass. Together, these formulas are called the \emph{stochastic Gauss equations} and they are illustrated numerically on an example from satellite dynamics.
\end{abstract}

\maketitle

\vskip 5mm
\begin{tiny}
	SYRTE UMR CNRS 8630, Observatoire de Paris, France
\end{tiny}

\tableofcontents

\section{Introduction}

Nowadays celestial mechanics is used by a wide class of scientists which provide multiple applications (see \cite{murray_dermott}, \cite{burns1976} and references therein). In all these works, the underling nature of the model considered is always deterministic. However, considering models with randomness or stochastic behavior is not an easy problem (examples for celestial mechanics can be found in \cite{cr}, \cite{cpp} and \cite{bcp}). Indeed, the nature and the origin of such a model needs a real discussion of the phenomena that we want to study.\\

Most of the problems in celestial mechanics are seen as a two-body problem perturbed by a force. For example, the main approach of the $n$-body problem is to consider two bodies, in mutual gravitational interaction, which are perturbed by the other bodies. In that case, the perturbed force is the gravitational attraction of the other bodies.\\

When we are dealing with more than two bodies, or more generally with an arbitrary perturbing force, the orbital elements, which characterize the trajectory of the bodies, do not remain constant. In that case, the main tool of celestial mechanics to study the perturbed problem, is the set of equations given the variations of the orbital elements called, the \textit{Gauss equations}. Because the Gauss equations allow studying general problems in celestial mechanics, we propose in this paper to generalize them to the stochastic case which include by definition the deterministic case. \\

We follow the strategy of \cite{burns1976} who derived Gauss's equations for the elliptical case with elementary considerations which defined the orbital elements in mean of the angular momentum per unit of mass and the energy per unit of mass. From an example of the satellite dynamics, we illustrate numerically the variation of the orbital elements associated. Finally, we give the variation of the Laplace-Runge-Lenz vector. It allows deriving the variation of the orbital elements in more general cases. For example, in the cases of null inclination, hyperbolic or parabolic configurations.

\section{Preliminaries}
We denote in bold every three dimensional vectors and $^\mathsf{T}$ denotes the transpose of a vector with respect to the Euclidean scalar product.

\subsection{Unperturbed Orbit}
In this section, we remind several formulas concerning the orbital elements. We refer to \cite{burns1976} and \cite[Chapter 2]{murray_dermott} for more details. \\

We consider a particle of mass $M_P$ moving in the $r^{-2}$ gravitational field of a fixed point mass $M_S$. The Newton's equation of motion is
\begin{equation}
\frac{d^2 \textbf{r}}{dt^2}=-\frac{\mu}{r^3}\textbf{r} \label{eqmotion}
\end{equation}
where $\mu=G(M_S+M_P)$, $G$ being the universal gravitational constant, $\textbf{r}$ is the position vector from $M_S$ to $M_P$. We denote by $\textbf{v}$ the velocity vector and $\textbf{H}=\textbf{r}\times \textbf{v}$ the angular momentum per unit of mass. Its norm
\begin{equation}
H=r^{2}\frac{d\theta}{dt} \label{eqnormH}
\end{equation}
is conserved with $\theta$ being the position angle measured from some fixed line in the plane. As usual, we choose this line to be the line of nodes (see Figure \ref{fig2}). The total energy per unit of mass is conserved and is defined by
\begin{equation}
E=\frac{1}{2}\textbf{v}^2-\frac{\mu}{r} \label{eqE}.
\end{equation}
The orbit $r$ is function of $\theta$ and is defined in the elliptical case by
\begin{equation}
r=\frac{p}{1+e\cos(\theta-\omega)} .\label{eqr}
\end{equation}

\begin{figure}[ht!]
	\centering
	\includegraphics[width=0.8\textwidth]{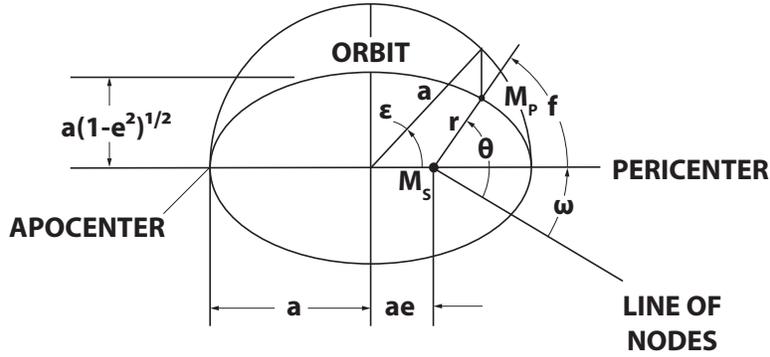}
	\caption{Diagram of the orbit plane of an elliptic orbit, showing the definition of the orbital elements $(a,e,\omega)$, the true anomaly $f$, the eccentric anomaly $\epsilon$ and pericenter location.}
	\label{fig1}
\end{figure}

The quantities $e$ and $\omega$ are constants and determined from the initial conditions. The parameter $p$ is the conic parameter given by
\begin{equation}
p \equiv \frac{H^{2}}{\mu} = a(1-e^{2}) .\label{eqp}
\end{equation}
The right-hand side of (\ref{eqp}) defines $a$ and the argument of the cosine term in (\ref{eqr}) is used to introduce the true anomaly,
\begin{equation}
f\equiv\theta-\omega ,\label{eqf}
\end{equation}
the particle's angular position measured from pericenter (see Figs. \ref{fig1} and \ref{fig2}). An equivalent solution is
\begin{equation}
r=a(1-e\cos\epsilon) \label{eqreps}
\end{equation}
where $\epsilon$ is the eccentric anomaly (see Fig. \ref{fig1}). The true anomaly is related to the eccentric anomaly by
\begin{equation}
\cos \epsilon = \frac{e+\cos f}{1+e\cos f} \quad \text{and} \quad \sin \epsilon = \frac{\sqrt{1-e^2}}{1+e\cos f} \sin f \label{eqsineps}.
\end{equation}
The particle's radial (resp. transverse) velocity is defined as
\begin{equation}
\frac{dr}{dt}=\frac{H}{p}e\sin f \label{eqdotr} \quad \left(\text{resp.} \quad  r\frac{d\theta}{dt}=\frac{H}{p}(1+e\cos f) \right).
\end{equation}

\subsection{Orbital elements}
To describe the particle orbit as a function of time, six constants, are required. These constants are chosen to be the orbital elements. Three orbital elements, $a,e$ and $\omega$, have already been presented. A fourth is needed to completely describe the two-dimensional motion of the particle in the orbital plane. Usually the mean anomaly $M$, related to the Kepler's equation as
\begin{equation}
M =\epsilon-e\sin\epsilon, \label{eqkep}
\end{equation}
is chosen. Using Equations \eqref{eqsineps}, we obtain an equivalent form, 
\begin{equation}
M =\arctan\left(\frac{\sqrt{1-e^2} \sin f}{e+\cos f}\right)-\frac{e\sqrt{1-e^2} \sin f}{1+e\cos f} \label{eqkepbis}.
\end{equation}
The remaining two orbital elements, the inclination $i$ and the longitude of the ascending node $\Omega$, give the orientation of the orbital plane in space as shown in Figure \ref{fig2}. \\

Let $\{\textbf{e}_R,\textbf{e}_T,\textbf{e}_N\}$ being an orthogonal unit vector base where $\textbf{e}_R$ is the normalized radial vector $\textbf{r}$, $\textbf{e}_T$ is transverse to the radial vector in the orbit plane (positive in the direction of motion of the particle) and $\textbf{e}_N$ is normal to the orbit plane in the direction $\textbf{H}$.

\begin{figure}[ht!]
	\centering
	\includegraphics[width=0.8\textwidth]{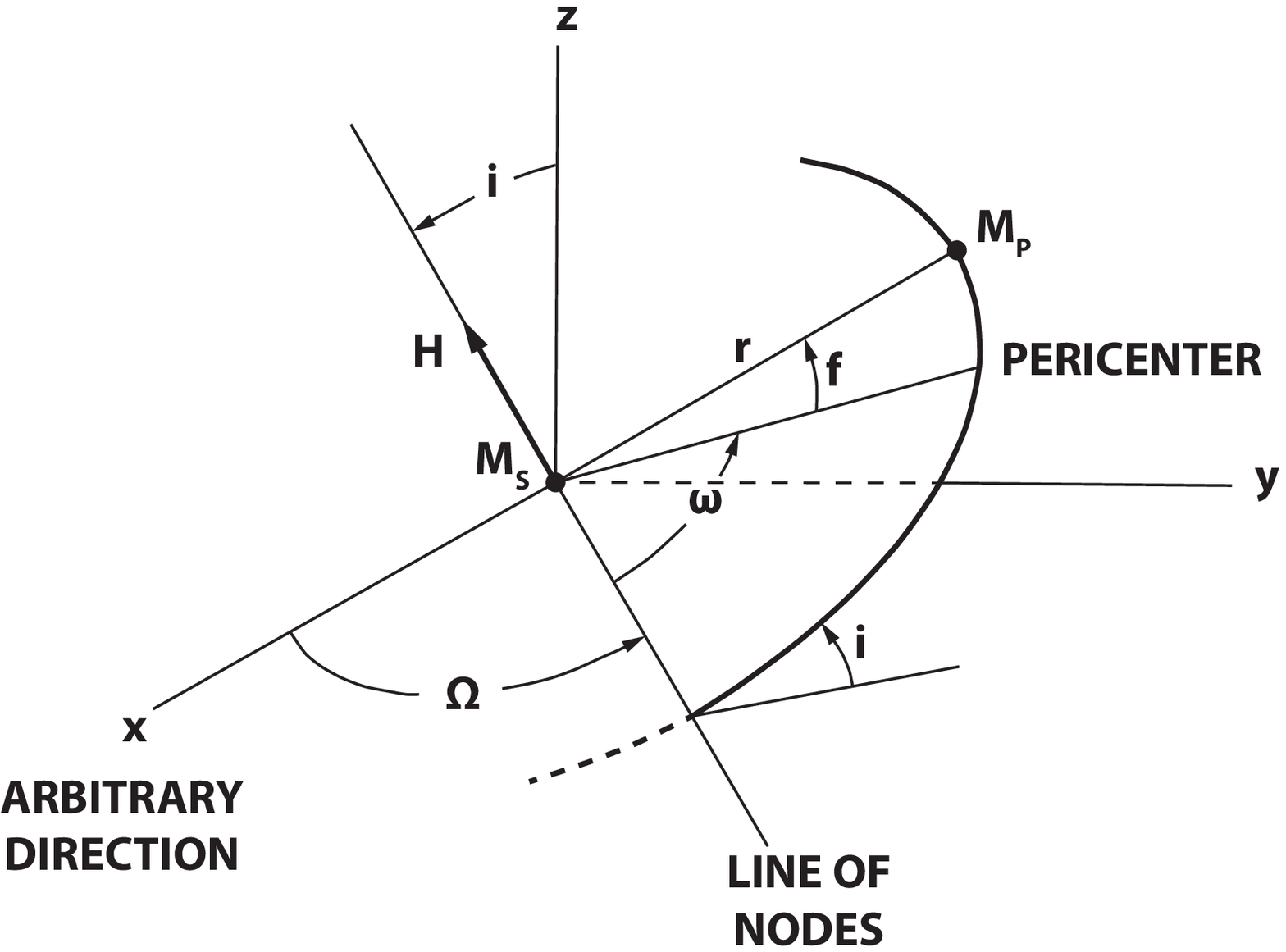}
	\caption{Orbital motion with respect to the reference plane in three dimensional space.}
	\label{fig2}
\end{figure}

\subsection{Energy and angular momentum of the orbit}
We want to express the orbital elements in terms of the orbital energy per unit of mass and angular momentum per unit of mass. We have the well known relations
\begin{equation}
H=\sqrt{\mu a(1-e^{2})} \label{eqHorb}
\end{equation}
and
\begin{equation}
E=-\frac{\mu}{2a} .\label{eqEorb}
\end{equation}
The semi-major axis is only determined by $E$ and the orbital eccentricity is only determined by $E$ and $H$ as
\begin{equation}
e=\sqrt{1+\frac{2 H^2 E}{\mu^2}} . \label{eqe}
\end{equation}
Similarly, as it can be seen from Figure \ref{fig2}, $i$ and $\Omega$ are given by components of the angular momentum vector per unit of mass vector as
\begin{align}
\cos i&=\frac{H_{z}}{H} . \label{eqcosi} \\
\tan\Omega&=-\frac{H_{x}}{H_{y}} , \label{eqtanOmega}
\end{align}
where $H_{x},\ H_{y}$ and $H_{z}$ are the components of \textbf{H} in the inertial reference system attached to $M_S$. Equations (\ref{eqHorb})-(\ref{eqtanOmega}) give four orbital elements in terms of four pieces of information contained in $\textbf{H}$ and $E$.

\section{Perturbed problem}

The problem to be solved is to find the equations governing the time rate of change of the set $(a,e,i,\omega,\Omega,M)$ induced by the action of a stochastic perturbing force $\textbf{F}$.

\subsection{Reminder about stochastic differential equations}

We remind basic properties and definition of stochastic differential equations in the sense of It\^o. We refer to the book \cite{oksendal} for more details and basic properties of the It\^o stochastic calculus.\\

A {\it stochastic differential equation} is formally written (see \cite[Chapter V]{oksendal}) in differential form as  
\begin{equation}
\label{eqdX}
dX_t = \mu (t,X_t)dt+\sigma(t,X_t)dB_t ,
\end{equation}
which corresponds to the stochastic integral equation
\begin{equation}
\label{eqX}
X_t=X_0+\int_0^t \mu (s,X_s)\,ds+\int_0^t \sigma (s,X_s)\,dB_s ,
\end{equation}
where the second integral is an It\^o integral (see \cite[Chapter III]{oksendal}) and $B_t$ is the classical Brownian motion (see \cite[Chapter II, p.7-8]{oksendal}).\\

We now turn to the situation in higher dimensions: Let $\textbf{B}(t)=(B_1(t),\ \ldots,\ B_m(t))^\mathsf{T}$ denote $m$-dimensional Brownian motion. We can form the following $n$ It\^{o} processes
\begin{align}
\left\{\begin{array}{ccccc}
&dX_{1}(t)=&\bar{X}_{1}(t)dt+&\tilde{X}_{11}dB_1(t)+\cdots+&\tilde{X}_{1m}dB_m(t)\\
&\vdots &\vdots & \vdots &\vdots\\
&dX_{n}(t)=&\bar{X}_{n}(t)dt+&\tilde{X}_{n1}dB_1(t)+\cdots+&\tilde{X}_{nm}dB_m(t)
\end{array}\right. 
\end{align}
for $(1\leq i\leq n,\ 1\leq j\leq m)$. Or, in matrix notation simply
\begin{equation}
d\textbf{X}(t)=\bar{\textbf{X}}(t)dt+\tilde{\textbf{X}}(t)\cdot d\textbf{B}(t) ,
\end{equation}
where
\begin{align*}
\textbf{X}=\left(\begin{array}{c}
X_{1}\\
\vdots \\
X_{n}
\end{array}\right),
\ \bar{\textbf{X}}=\left(\begin{array}{c}
\bar{X}_{1}\\
\vdots \\
\bar{X}_{n}
\end{array}\right),
\ \tilde{\textbf{X}}=\left(\begin{array}{ccc}
\tilde{X}_{11} & \cdots & \tilde{X}_{1m}\\
\vdots  &  & \vdots \\
\tilde{X}_{n1} & \cdots & \tilde{X}_{nm}
\end{array}\right),
\ d\textbf{B}(t)=\left(\begin{array}{c}
dB_1(t)\\
\vdots \\
dB_m(t)
\end{array}\right) .
\end{align*}
Such a process $X(t)$ is called an $n$-dimensional It\^{o} process (or just an It\^{o} process). An important tool to study functions which depend of stochastic processs is the general It\^{o} formula. Let $\textbf{X}(t)$ be an $n$-dimensional It\^{o} process as above and let $g(t,x)=(g_{1}(t,x),\ \ldots \ ,g_{p}(t,x))^\mathsf{T}$ be a $C^{2}$ map from $\mathbb{R}^+ \times \mathbb{R}^{n}$ into $\mathbb{R}^{p}$. Then the process $\textbf{Y}(t)=g(t,\textbf{X}(t))$ is again an It\^{o} process, whose component number $k,Y_{k}$, is given by
\begin{equation}
dY_{k}=\displaystyle \frac{\partial g_{k}}{\partial t}(t,\textbf{X})dt+\sum_{i}\frac{\partial g_{k}}{\partial x_{i}}(t,\textbf{X})dX_{i}+\frac{1}{2}\sum_{i,j}\frac{\partial^{2}g_{k}}{\partial x_{i}\partial x_{j}}(t,\textbf{X})dX_{i}dX_{j} ,
\end{equation}
where $dB_{t,i}\,dB_{t,j}=\delta_{ij}\, dt$ and $dB_{t,i}\,dt=dt\,dB_{t,i}=0$. Denoting $\tilde{\textbf{X}}_p=(\tilde{X}_{p1},\cdots, \tilde{X}_{pm})^\mathsf{T}$, for all $1 \leq p \leq n$, $dY_k$ can be written as
\begin{equation}
dY_{k}=\displaystyle \bigg[\frac{\partial g_{k}}{\partial t}(t,\textbf{X}) +\frac{1}{2}\sum_{i,j}\frac{\partial^{2}g_{k}}{\partial x_{i}\partial x_{j}}(t,\ X)\tilde{\textbf{X}}_i \cdot \tilde{\textbf{X}}_j \bigg]dt+\sum_{i}\frac{\partial g_{k}}{\partial x_{i}}(t,\textbf{X})dX_{i} .
\end{equation}

\subsection{Equations of perturbed motion}
In the following, for notation convenience, we omit the dependence for each process and the Brownian motion. \\

First, we write in the differential form equations of motion to be coherent with the formulation of stochastic differential equations. We recall that \textbf{r} is the vector position from $M_S$ to $M_P$ and \textbf{v} is the velocity vector. Thus, we have
\begin{align}
d\textbf{r} &= \textbf{v}dt, \label{eqdvecr} \\
d\textbf{v} &= -\frac{\mu}{r^3}\textbf{r} \ dt + d\textbf{v}_P, \label{eqdvecv}
\end{align}
where $d\textbf{v}_P$ corresponds to the perturbing acceleration induced by the perturbing force $\textbf{F}$. In $\{\textbf{e}_R,\textbf{e}_T,\textbf{e}_N\}$, the position vector is $\textbf{r} = r \textbf{e}_R$. Then, its variation is given by
\begin{align}
d\textbf{r} &= dr \textbf{e}_R + r d\theta \textbf{e}_T .
\end{align}
Let $v$ be the radial velocity and $w$ be the transverse velocity defined by
\begin{equation}
dr=vdt \quad \text{and} \quad d\theta=wdt .
\end{equation}
Thus, the variation of the position vector is finally given by
\begin{align}
d\textbf{r} &= \left(v \textbf{e}_R + r w \textbf{e}_T \right)dt
\end{align}
and we identify the velocity vector $\textbf{v}$ as
\begin{align}
\textbf{v}=v \textbf{e}_R + r w \textbf{e}_T.
\end{align}
It follows the variation of the velocity vector is given by
\begin{equation}
d\textbf{v} = (dv-rw^2dt)\textbf{e}_R + (2vwdt+rdw)\textbf{e}_T .
\end{equation}
In order to get the expression of the radial and transverse acceleration, we make precise the expression of the perturbing acceleration $d\textbf{v}_P$. \\

\noindent\textbf{Stochastic perturbing acceleration:}
\emph{Let $\textbf{B}$ be a $m$-dimensional Brownian motion. The stochastic perturbing acceleration is defined as
\begin{equation}
d\textbf{v}_{P} = \bar{\textbf{v}}_P \ dt + \tilde{\textbf{v}}_P \cdot d\textbf{B} \label{model_sto}
\end{equation}
where $\bar{\textbf{v}}_P =(\bar{R},\bar{T},\bar{N})^\mathsf{T}$ is, in our problem, the deterministic part of the perturbation 
\begin{equation*}
\tilde{\textbf{v}}_P=\left(\begin{matrix}
\tRR_1 & \tRR_2 & \cdots & \tRR_m \\ 
\tTT_1 & \tTT_2 & \cdots & \tTT_m \\ 
\tNN_1 & \tNN_2 & \cdots & \tNN_m
\end{matrix}\right)
\end{equation*}
is the purely stochastic part of the perturbation.}  \\

In what follows, we denote $\tR$, $\tT$ and $\tN$ the rows of $\tilde{\textbf{v}}_P$. We also simplify the notation for the scalar product of a vector $\mathbf{u}$ with itself, $\mathbf{u}\cdot \mathbf{u}$ as $\mathbf{u}^2$. \\

Using (\ref{eqdvecv}) and the expression of the stochastic perturbing acceleration \eqref{model_sto}, we obtain the final expression of the radial and transverse accelerations written as
\begin{align}
dv &= \left( rw^2-\frac{\mu}{r^2} + \bar{R} \right)dt + \tR\cdot d\textbf{B},\label{dv}\\
dw &= \left(-\frac{2vw}{r} + \frac{\bar{T}}{r} \right)dt + \frac{\tT}{r}\cdot d\textbf{B} \label{dw}.
\end{align}

In first consequences, we obtain the variations of the angular momentum and the energy as follows:

\begin{lemma}
The variation of the angular momentum $H$ is given by
\begin{equation}
dH =\frac{a(1-e^2)}{1+e\cos f} \bar{T} dt + \frac{a(1-e^2)}{1+e\cos f} \tT\cdot d\textbf{B}, \label{dH}
\end{equation}
and the variation of the energy $E$ is given by
\begin{align}
dE = &\left[\sqrt{\frac{\mu}{a(1-e^2)}} \left(e\sin f \bar{R} +(1+e\cos f)\bar{T}\right)+\frac{\tR^2+\tT^2}{2}\right]dt \nonumber \\ 
&+ \sqrt{\frac{\mu}{a(1-e^2)}} \left( e\sin f \tR +(1+e\cos f)\tT \right) \cdot d\textbf{B}  \label{dE}
\end{align}
\end{lemma}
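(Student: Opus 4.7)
My plan is to apply the multidimensional It\^o formula to $H$ and $E$ viewed as smooth functions of the in-plane scalar processes $(r,v,w)$, exploiting the SDEs \eqref{dv}, \eqref{dw} together with the purely deterministic relation $dr=v\,dt$.

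First, I would record the closed-form expressions. In the frame $\{\textbf{e}_R,\textbf{e}_T,\textbf{e}_N\}$ one has $\textbf{r}=r\textbf{e}_R$ and $\textbf{v}=v\textbf{e}_R+rw\textbf{e}_T$, so a direct cross-product gives $\textbf{H}=r^{2}w\,\textbf{e}_N$, hence $H=r^{2}w$. Similarly $E=\tfrac{1}{2}(v^{2}+r^{2}w^{2})-\mu/r$. Next I would list the only non-vanishing quadratic (co)variations: because $dr=v\,dt$ is a drift process, the products $(dr)^{2}$, $(dr)(dv)$ and $(dr)(dw)$ all vanish, while
\begin{equation*}
(dv)^{2}=\tR^{2}\,dt,\qquad (dw)^{2}=\frac{\tT^{2}}{r^{2}}\,dt,\qquad (dv)(dw)=\frac{\tR\cdot\tT}{r}\,dt.
\end{equation*}

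For $H=r^{2}w$ the only second partial that could contribute is $\partial^{2}H/\partial w^{2}=0$, and the mixed second derivative $\partial^{2}H/\partial r\partial w=2r$ is killed by $(dr)(dw)=0$. Thus It\^o collapses to the ordinary chain rule: $dH=2rw\,dr+r^{2}\,dw$. Substituting $dr=v\,dt$ and \eqref{dw} produces a perfect cancellation $2rwv-2rvw=0$ in the drift, leaving $dH=r\bar{T}\,dt+r\tT\cdot d\textbf{B}$, and replacing $r$ by $a(1-e^{2})/(1+e\cos f)$ from \eqref{eqr}--\eqref{eqp} gives \eqref{dH}.

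For $E$ the It\^o correction is nontrivial because $\partial^{2}E/\partial v^{2}=1$ and $\partial^{2}E/\partial w^{2}=r^{2}$, which combined with the quadratic variations above contribute $\tfrac{1}{2}\tR^{2}+\tfrac{1}{2}\tT^{2}$ to the drift; all other second derivatives either vanish or are paired with a vanishing covariation. The first-order part is $(rw^{2}+\mu/r^{2})\,dr+v\,dv+r^{2}w\,dw$, and substituting \eqref{dv}--\eqref{dw} one checks that the terms involving $\mu/r^{2}$ and $rvw^{2}$ cancel exactly, leaving a drift of $v\bar{R}+rw\bar{T}+\tfrac{1}{2}(\tR^{2}+\tT^{2})$ and a diffusion of $(v\tR+rw\tT)\cdot d\textbf{B}$. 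The last step is purely algebraic: using the radial and transverse velocity formulas \eqref{eqdotr} together with $H=\sqrt{\mu a(1-e^{2})}$ from \eqref{eqHorb}, both $v$ and $rw$ acquire the common factor $\sqrt{\mu/(a(1-e^{2}))}$, yielding exactly \eqref{dE}.

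The only place one has to be careful is keeping track of which cross-terms survive and verifying the drift cancellations for $E$; no conceptual obstacle arises, since $dr$ has no Brownian component and therefore simplifies It\^o's formula considerably.
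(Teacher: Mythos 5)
Your proposal is correct and follows essentially the same route as the paper: apply It\^o's formula to $H=r^{2}w$ and $E=\tfrac12(v^{2}+r^{2}w^{2})-\mu/r$ using \eqref{dv}--\eqref{dw} and $dr=v\,dt$, arriving at the same intermediate expressions $dH=r\bar{T}\,dt+r\tT\cdot d\textbf{B}$ and $dE=\left(v\bar{R}+rw\bar{T}+\tfrac{\tR^{2}+\tT^{2}}{2}\right)dt+\left(v\tR+rw\tT\right)\cdot d\textbf{B}$, and then substituting the orbital-element formulas \eqref{eqr}--\eqref{eqdotr}. Your write-up simply makes explicit the quadratic-variation bookkeeping and drift cancellations that the paper labels ``straightforward.''
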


\begin{proof}
The computation is straightforward. Using It\^o's formula, we obtain
\begin{align}
dH =& r \bar{T} dt + r \tT\cdot d\textbf{B}, \\
dE = &\left( v\bar{R} +rw\bar{T}+\frac{\tR^2}{2}+\frac{\tT^2}{2}\right)dt + \left(v\tR + rw \tT\right) \cdot d\textbf{B}.
\end{align}
Using the formula from \eqref{eqr} to \eqref{eqsineps}, we obtain the result.
\end{proof}

{\bf Remark:}
In Equation \eqref{de}, the scalar product $\tR^2$ and $\tT^2$ are exactly the \emph{supplementary terms} obtained with the It\^o formula. Contrary to the classical derivation of the Gauss equations, the stochastic nature of the perturbation induces these extra terms. In consequence, it will bring new terms in the variation of the orbital elements related to the energy. The apparition of these new terms are exactly the reason and the need of a new set of Gauss equations.

\section{Stochastic Gauss Equations in terms of $a,e,i,\Omega,\omega,M$}

In this section, we obtain the equations governing the variation of the orbital elements $a,e,i,\Omega,\omega$ and $M$ induced by the stochastic perturbing acceleration \eqref{model_sto}. All the proofs are given in Appendix.

\begin{lemma}[The Semi-major axis $a$]
The variation of the semi-major axis $a$ is given by
\begin{align}
da=&\bigg[ \frac{2 a^{3/2}}{\sqrt{\mu (1-e^2)}}\left(e \sin f \bar{R} + (1+e\cos f) \bar{T}  \right) \nonumber \\
&+ \frac{a^2}{\mu} \left( \left(1+\frac{4e^2 \sin^2 f}{1-e^2}\right)\tR^2 + \left(1+\frac{4(1+e\cos f)^2}{1-e^2}\right)\tT^2 \right) \nonumber \\
&+\frac{8a^2}{\mu (1-e^2)}e\sin f(1+e\cos f)\tR\cdot\tT \bigg]dt \nonumber \\
&+ \frac{2 a^{3/2}}{\sqrt{\mu (1-e^2)}}\left(e \sin f \tR + (1+e\cos f) \tT  \right)\cdot d\textbf{B} . \label{da}
\end{align}
\end{lemma}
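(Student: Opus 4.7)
The plan is to obtain $da$ by applying It\^o's formula to the scalar relation $a = -\mu/(2E)$ coming from \eqref{eqEorb}, viewing $a$ as a $C^{2}$ scalar function of the scalar It\^o process $E$. Since \eqref{dE} already provides the full It\^o decomposition of $dE$, this bypasses any new differentiation with respect to $(\textbf{r},\textbf{v})$ and reduces the proof to algebra: substitute $dE$ into $da = g'(E)\,dE + \tfrac12 g''(E)\,(dE)^2$ and collect terms. The structural reason this must reproduce exactly the form of \eqref{da} is that the diffusion coefficient of $E$ is a linear combination of $\tR$ and $\tT$ only, so every second-order correction will be a scalar combination of $\tR^2$, $\tT^2$ and $\tR\cdot\tT$, matching the target.

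Concretely, I would set $g(E) = -\mu/(2E)$, compute $g'(E) = \mu/(2E^2)$ and $g''(E) = -\mu/E^3$, and then re-express both using $E = -\mu/(2a)$ to get the clean identities $g'(E) = 2a^2/\mu$ and $g''(E) = 8a^3/\mu^2$. The quadratic variation follows from the stochastic part of \eqref{dE} and the rule $(\mathbf{u}\cdot d\textbf{B})^2 = \mathbf{u}^2\,dt$, yielding
\begin{equation*}
(dE)^2 = \frac{\mu}{a(1-e^2)}\Bigl( e^2\sin^2 f\,\tR^2 + 2e\sin f(1+e\cos f)\,\tR\cdot\tT + (1+e\cos f)^2\,\tT^2\Bigr)\,dt.
\end{equation*}

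Assembly: the stochastic part of $g'(E)\,dE$ immediately gives the $d\textbf{B}$ term of \eqref{da} with prefactor $(2a^2/\mu)\sqrt{\mu/(a(1-e^2))} = 2a^{3/2}/\sqrt{\mu(1-e^2)}$. The drift arises from three sources: multiplying the deterministic drift of \eqref{dE} by $g'(E) = 2a^2/\mu$ gives the first line of \eqref{da}; multiplying the \emph{already-present} It\^o correction $(\tR^2+\tT^2)/2$ inside \eqref{dE} by $2a^2/\mu$ contributes $\tfrac{a^2}{\mu}(\tR^2+\tT^2)$; and $\tfrac12 g''(E)(dE)^2 = \tfrac{4a^2}{\mu(1-e^2)}[\,\cdots\,]\,dt$ supplies the remaining pieces. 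Collecting the coefficient of $\tR^2$ gives $\tfrac{a^2}{\mu}\bigl(1 + \tfrac{4e^2\sin^2 f}{1-e^2}\bigr)$, the coefficient of $\tT^2$ gives $\tfrac{a^2}{\mu}\bigl(1 + \tfrac{4(1+e\cos f)^2}{1-e^2}\bigr)$, and the $\tR\cdot\tT$ term gives $\tfrac{8a^2}{\mu(1-e^2)}e\sin f(1+e\cos f)$, matching \eqref{da}. The only real bookkeeping hurdle is to keep the two sources of second-order contributions --- the It\^o correction inherited from $dE$ itself and the fresh one from $(dE)^2$ --- carefully separated before combining them, since it is precisely their sum that produces the characteristic $1 + 4(\cdot)^2/(1-e^2)$ coefficients.
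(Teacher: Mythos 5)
Your proposal is correct and follows essentially the same route as the paper: apply It\^o's formula to $a=-\mu/(2E)$ (the paper writes $da=\frac{\mu}{2E^2}dE-\mu\frac{\tE\cdot\tE}{2E^3}dt$) and substitute the expression \eqref{dE} for $dE$. Your explicit bookkeeping of the two second-order sources and the resulting coefficients of $\tR^2$, $\tT^2$ and $\tR\cdot\tT$ matches \eqref{da} exactly.
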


The proof is given in Section \ref{dem_da}.

\begin{lemma}[The Eccentricity $e$]
The variation of the eccentricity $e$ is given by
\begin{align}
de 	= &\bigg[ \sqrt{\frac{a(1-e^2)}{\mu}}\left(\sin f \bar{R} + (\cos f + \frac{e+\cos f}{1+e\cos f})\bar{T} \right) +  \frac{a(1-e^2)\cos^2 f}{2e\mu}\tR^2  \nonumber \\
&+\frac{a(1-e^2)}{\mu e}\left(2-\frac{\cos f}{2}\left(\frac{2+e\cos f}{1+e \cos f}\right) \left(\cos f + \frac{e+\cos f}{1+e\cos f}\right)\right)\tT^2 \nonumber \\
&+ \frac{a(1-e^2)}{\mu e (1+e\cos f)}(e\sin^3 f - \sin 2f) \tR \cdot \tT\bigg]dt \nonumber \\
&+\sqrt{\frac{a(1-e^2)}{\mu}}\left(\sin f \tR + (\cos f + \frac{e+\cos f}{1+e\cos f})\tT \right)\cdot d\textbf{B} . \label{de}
\end{align}
\end{lemma}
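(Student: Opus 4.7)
The plan is to view the eccentricity as a smooth function $e = g(H,E) = \sqrt{1 + 2H^2E/\mu^2}$ via \eqref{eqe} and to apply the multidimensional It\^o formula recalled in Section~3.1 to the two-dimensional It\^o process $(H,E)$, whose dynamics are given in \eqref{dH}--\eqref{dE}. This reduces the proof to three tasks: (i) compute the first and second partials of $g$; (ii) form the quadratic covariations $(dH)^2$, $dH\,dE$, $(dE)^2$ using $dB_i\,dB_j = \delta_{ij}\,dt$; and (iii) substitute $r$, the radial velocity $v$ and the transverse velocity $rw$ by their orbital-element expressions from \eqref{eqr} and \eqref{eqdotr}, together with \eqref{eqHorb}--\eqref{eqEorb}, to convert the result into the variables of the statement.

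The first partials $\partial_H g = 2HE/(e\mu^2)$ and $\partial_E g = H^2/(e\mu^2)$ simplify, on the orbit, to $-\frac{1}{e}\sqrt{(1-e^2)/(a\mu)}$ and $a(1-e^2)/(e\mu)$. Combined with the martingale parts of $dH$ and $dE$ and with $v = e\sin f\sqrt{\mu/[a(1-e^2)]}$, $rw = (1+e\cos f)\sqrt{\mu/[a(1-e^2)]}$, the coefficient of $\tR$ in the diffusion part of $de$ collapses immediately to $\sqrt{a(1-e^2)/\mu}\sin f$, and the coefficient of $\tT$ follows from the algebraic identity $(1+e\cos f)^2 - (1-e^2) = e[(1+e\cos f)\cos f + (e+\cos f)]$. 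The deterministic drift terms in $\bar{R}, \bar{T}$ fall out by exactly the same algebra applied to the drifts of $dH$ and $dE$.

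The It\^o correction drift is the sum of $\partial_E g\cdot(\tR^2+\tT^2)/2$, inherited from the drift of $dE$, and of $\tfrac{1}{2}\bigl[\partial_H^2 g\,(dH)^2 + 2\partial_{HE}^2 g\,dH\,dE + \partial_E^2 g\,(dE)^2\bigr]$, where $(dH)^2 = r^2\tT^2\,dt$, $dH\,dE = (rv\,\tR\cdot\tT + r^2 w\,\tT^2)\,dt$, and $(dE)^2 = (v\tR+rw\tT)^2\,dt$. Collecting by $\tR^2$, $\tT^2$, $\tR\cdot\tT$: the $\tR^2$ coefficient is easy (only $\partial_E g/2$ and $\tfrac{1}{2}\partial_E^2 g\cdot v^2$ contribute, reducing to $a(1-e^2)\cos^2 f/(2e\mu)$ via $\sin^2 f = 1 - \cos^2 f$); the $\tR\cdot\tT$ coefficient reduces similarly, via $r^2 w = H$ and $2\sin f\cos f = \sin 2f$, to $a(1-e^2)(e\sin^3 f - \sin 2f)/[\mu e(1+e\cos f)]$.

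The main obstacle will be the $\tT^2$ coefficient, which absorbs contributions from all four sources. Using $H^2 = \mu a(1-e^2)$, $E = -\mu/(2a)$ and the relation $2H^2 E/\mu^2 = e^2 - 1$ to telescope the second partials into clean expressions in $a$ and $e$, this coefficient reduces to $\frac{a(1-e^2)}{2e^3\mu}\bigl[2 + 3e^2 - (1+e\cos f)^2 - (1-e^2)/(1+e\cos f)^2\bigr]$. Matching this against the target expression in \eqref{de} is then a polynomial identity in $\cos f$, routine but bookkeeping-heavy; no conceptual step beyond what is already in the derivation of $dH$ and $dE$ is required.
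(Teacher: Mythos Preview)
Your proposal is correct and follows exactly the same route as the paper: apply It\^o's formula to $e=g(H,E)=\sqrt{1+2H^2E/\mu^2}$ from \eqref{eqe}, insert the dynamics \eqref{dH}--\eqref{dE}, and simplify using \eqref{eqr}--\eqref{eqdotr} and \eqref{eqHorb}--\eqref{eqEorb}. The paper records the It\^o expansion as $de=\frac{2HE}{e\mu^2}dH+\frac{H^2}{e\mu^2}dE+\bigl(\frac{E}{e^3\mu^2}\tH^2-\frac{H^4}{2e^3\mu^4}\tE^2+\frac{2H(H^2E+\mu^2)}{e^3\mu^4}\tH\cdot\tE\bigr)dt$ and then writes ``after simplifications we obtain the result''; your outline simply makes those simplifications (the diffusion coefficients, then the $\tR^2$, $\tR\cdot\tT$, $\tT^2$ drift pieces) explicit.
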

 
 The proof is given in Section \ref{dem_de}.
 
\begin{lemma}[The Inclination $i$ and the ascending node $\Omega$]
The variation of the inclination $i$ is given by
\begin{align}
di = &\bigg[\sqrt{\frac{a(1-e^2)}{\mu}}\frac{\cos (f+\omega)}{(1+e\cos f)} \bN \nonumber \\
& - \frac{a \left(1-e^2\right)}{\mu (1+ e\cos f)^2}\cos(f+\omega)\left( \frac{\cot i \cos(f+\omega)}{2} \tN^2 + \tT \cdot \tN \right)\bigg]dt \nonumber \\
&+ \sqrt{\frac{a(1-e^2)}{\mu}}\frac{\cos (f+\omega)}{(1+e\cos f)} \tN \cdot d\textbf{B} . \label{di} 
\end{align}
and the variation of the ascending node $\Omega$ is given by
\begin{align}
d\Omega = &\bigg[\sqrt{\frac{a(1-e^2)}{\mu}} \frac{\sin(f+\omega)}{\sin i (1+e\cos f)} \bN \nonumber \\
&-  \frac{a \left(1-e^2\right)}{\mu (1+ e\cos f)^2}\frac{\sin (f+\omega)}{\sin i}\left(\cos(f+\omega) \cot i \, \tN^2 + \tT \cdot \tN \right) \bigg]dt  \nonumber \\
&+ \sqrt{\frac{a(1-e^2)}{\mu}} \frac{\sin(f+\omega)}{\sin i (1+e\cos f)}\tN \cdot d\textbf{B} . \label{dOmega}
\end{align}

\end{lemma}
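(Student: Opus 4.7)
The strategy is to apply It\^o's formula to the scalar relations $\cos i = H_z/H$ and $\tan\Omega = -H_x/H_y$ given by \eqref{eqcosi}--\eqref{eqtanOmega}. As preparation I would first derive the SDE satisfied by the angular momentum vector $\mathbf{H} = \mathbf{r}\times\mathbf{v}$ itself. Since $d\mathbf{r} = \mathbf{v}\,dt$ carries no Brownian component, the It\^o product rule reduces to $d\mathbf{H} = \mathbf{r}\times d\mathbf{v} = \mathbf{r}\times d\mathbf{v}_P$ (the Keplerian term drops because $\mathbf{r}\times\mathbf{r} = 0$). Expanding in the $\{\mathbf{e}_R,\mathbf{e}_T,\mathbf{e}_N\}$ basis yields
\[
d\mathbf{H} = r(\bT\,\mathbf{e}_N - \bN\,\mathbf{e}_T)\,dt + r(\tT\,\mathbf{e}_N - \tN\,\mathbf{e}_T)\cdot d\mathbf{B}.
\]

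Next I would project this identity onto the inertial axes to read off $dH_x,dH_y,dH_z$. The standard rotations between the RTN and the inertial frame give $\mathbf{e}_N\cdot\mathbf{e}_z = \cos i$ and $\mathbf{e}_T\cdot\mathbf{e}_z = \sin i\cos(f+\omega)$, with analogous expressions involving $\sin(f+\omega)$ and $\Omega$ for the $x$- and $y$-components. Combined with the SDE for $H$ from the previous lemma, this provides $dH_z$ and $dH$ explicitly, and applying the two-variable It\^o formula to $i = \arccos(H_z/H)$ gives the martingale contribution
\[
\frac{1}{H\sin i}\bigl(\cos i\,dH - dH_z\bigr)_{\text{mart}} = \frac{r\cos(f+\omega)}{H}\,\tN\cdot d\mathbf{B},
\]
where the $\tT$-pieces of $dH_z$ and $\cos i\,dH$ cancel exactly; substituting $r/H = \sqrt{a(1-e^2)/\mu}/(1+e\cos f)$ from \eqref{eqr} and \eqref{eqHorb} recovers the stochastic term in \eqref{di}. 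The drift picks up the corresponding deterministic $\bN$-contribution together with the It\^o corrections from $(dH_z)^2$, $dH_z\,dH$, $(dH)^2$ and the second derivatives of $\arccos$; collecting like terms, the $\tT^2$ coefficient vanishes identically and one is left with the quadratic form $\tfrac12\cot i\,\cos(f+\omega)\,\tN^2 + \tT\cdot\tN$ carrying the prefactor $-a(1-e^2)\cos(f+\omega)/[\mu(1+e\cos f)^2]$.

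The proof of \eqref{dOmega} follows the same template with $\Omega = \arctan(-H_x/H_y)$. The projections onto $\mathbf{e}_x$ and $\mathbf{e}_y$ together with $H_x^2+H_y^2 = H^2\sin^2 i$ bring in a factor $\sin(f+\omega)/\sin i$ in place of $\cos(f+\omega)$, the same $\tT$-cancellation in the martingale part produces a lone $\tN\cdot d\mathbf{B}$ term, and the It\^o corrections from $(dH_x)^2$, $dH_x\,dH_y$, $(dH_y)^2$ and the second derivatives of $\arctan$ collapse to the quadratic form $\cos(f+\omega)\cot i\,\tN^2 + \tT\cdot\tN$ as announced.

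The main obstacle is the bookkeeping of the quadratic-variation corrections: each of $(dH_z)^2$, $dH_z\,dH$, $(dH)^2$ (and their $\Omega$-analogues) is itself a quadratic form in $\tT$ and $\tN$, and its trigonometric weight has to be combined with the second-derivative contribution of $\arccos$ or $\arctan$ to collapse to the compact expression displayed in the lemma. Once the martingale cancellation producing the lone $\tN$-factor is recognized, the structure of the drift corrections becomes essentially forced.
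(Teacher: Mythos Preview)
Your approach is essentially the same as the paper's: compute $d\mathbf{H}=\mathbf{r}\times d\mathbf{v}_P$, rotate to the inertial frame to read off $dH_x,dH_y,dH_z$, and apply It\^o's formula to \eqref{eqcosi} and \eqref{eqtanOmega} to extract $di$ and $d\Omega$. The only cosmetic difference is that the paper applies It\^o to the implicit relations $\cos i=H_z/H$ and $\tan\Omega=-H_x/H_y$ and then solves for $di$, $d\Omega$, whereas you differentiate $\arccos$ and $\arctan$ directly; the cancellations and the final bookkeeping you describe are identical.
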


The proof is given in Section \ref{dem_didW}.

\begin{lemma}[The pericenter $\omega$]
The variation of the pericenter $\omega$ is given by
\begin{align}
d\omega &=\bigg[ \sqrt{\frac{a(1-e^2)}{\mu}}\left( -\frac{\cos f}{e} \bR + \frac{\sin f}{e} \left(\frac{2+e\cos f}{1+e\cos f} \right)\bT \right)  \nonumber \\
&+ \frac{a(1-e^2)}{\mu e^2}\bigg( \frac{\sin 2f}{2} \tR^2 - \left(e+\cos f(2+e\cos f)^2 \right)\frac{\sin f}{(1+e\cos f)^2} \tT^2 \nonumber \\
&+ \left(\frac{2+e\cos f}{1+e\cos f}\right)\cos 2f \tR \cdot \tT \bigg) +\frac{a(1-e^2)\sin(2(f+\omega))}{4\mu(1+e\cos f)^2} \tN^2\bigg]dt \nonumber \\
&+\sqrt{\frac{a(1-e^2)}{\mu}}\left( -\frac{\cos f}{e} \tR + \frac{\sin f}{e} \left(\frac{2+e\cos f}{1+e\cos f} \right)\tT \right) \cdot d\textbf{B}  -\cos i \ d\Omega\label{domega}.
\end{align}
\end{lemma}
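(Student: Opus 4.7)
The approach is to decompose $d\omega$ into an intrinsic contribution coming from the variation of the true anomaly $f$ and a geometric correction coming from the precession of the line of nodes. For the intrinsic part, the plan is to use the two identities
$$e\cos f=\frac{H^{2}}{\mu r}-1,\qquad e\sin f=\frac{Hv}{\mu},$$
where $v$ denotes the radial velocity ($dr=v\,dt$); the first follows from (\ref{eqr})--(\ref{eqp}) and the second from (\ref{eqdotr}). Applying It\^o's formula to each identity and then substituting the already-derived differentials $dH$ from (\ref{dH}), $de$ from (\ref{de}), $dr=v\,dt$ and $dv$ from (\ref{dv})---with $dB_{i}\,dB_{j}=\delta_{ij}\,dt$ used throughout to compute quadratic covariations---yields a linear system in $df$ whose coefficients involve $\sin f,\cos f,e,H,r,v$ and which can be solved explicitly for the It\^o differential of the true anomaly.

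To pass from $df$ to $d\omega$, the plan is to use the in-plane relation $d\theta=w\,dt$ from the definition of the transverse velocity, and then add the classical geometric correction that accounts for the fact that $\omega$ is measured from the \emph{moving} line of nodes: a precession $d\Omega$ of the node about the inertial $z$-axis shifts the in-plane reference direction by the projection $\cos i\,d\Omega$ of that rotation onto the orbit normal, which produces the final $-\cos i\,d\Omega$ term in (\ref{domega}). Equivalently, one can work with the argument of latitude $u=f+\omega$ expressed in inertial coordinates (for instance via $\sin u=r_{z}/(r\sin i)$, valid when $\sin i\neq 0$), take its It\^o differential, and then isolate $d\omega$; this route produces the $\tN^{2}\sin(2(f+\omega))/4$ coefficient directly from the quadratic covariation of the out-of-plane position with itself.

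The main obstacle will be the bookkeeping of the It\^o corrections. Each of $H,e,r,v$ is an It\^o process driven by all three rows $\tR,\tT,\tN$ of the stochastic perturbation, so solving the linear system for $df$ produces a dense collection of drift contributions in $\tR^{2}$, $\tT^{2}$, $\tN^{2}$ and $\tR\cdot\tT$; these must be reduced using $p=a(1-e^{2})=H^{2}/\mu$, $r(1+e\cos f)=p$, and the trigonometric identities $2\sin f\cos f=\sin 2f$ and $\cos 2f=\cos^{2}f-\sin^{2}f$ before they assemble into the compact form displayed in (\ref{domega}). Combining the resulting intrinsic expression with $-\cos i\,d\Omega$, and using the formulas for $di$ and $d\Omega$ from the preceding lemma to match cross-terms, gives the stated result.
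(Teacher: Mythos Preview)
Your plan for $df$ is a reasonable variant of the paper's: the paper applies It\^o's formula to the single relation $H^{2}/(\mu r)=1+e\cos f$ (your first identity only) and solves for $df$ after substituting the already-known $de$ and $dH$. Using both the cosine and the sine identity is equally legitimate and yields consistent information.

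There is, however, a genuine gap in your \emph{primary} route for $d\theta$. The heuristic ``$d\theta=w\,dt$, then add the geometric correction $-\cos i\,d\Omega$'' is correct only at the level of deterministic first-order variations; in the It\^o setting it misses a drift term. The paper obtains $d\theta$ by applying It\^o's formula to $r_{z}=r\sin i\sin\theta$ (using $d\mathbf{r}=\mathbf{v}\,dt$ on the left-hand side), and after simplification this gives
\[
d\theta \;=\; w\,dt \;+\; \frac{a(1-e^{2})\sin\!\bigl(2(f+\omega)\bigr)}{4\mu(1+e\cos f)^{2}}\,\tN^{2}\,dt \;-\;\cos i\,d\Omega .
\]
The middle $\tN^{2}$ term is an It\^o correction arising from the quadratic covariations $\ti\cdot\ti$, $\tthet\cdot\tthet$, $\ti\cdot\tthet$; it is \emph{not} produced by the classical node-precession argument. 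Consequently your two routes are not equivalent: only the second one---It\^o-differentiating $\sin u=r_{z}/(r\sin i)$---reproduces the $\tN^{2}\sin(2(f+\omega))/4$ coefficient in (\ref{domega}), and that is precisely the route the paper takes. You should commit to the second route and discard the first.
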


The proof is given in Section \ref{dem_dw}.

\begin{lemma}[The mean anomaly $M$]
The variation of the mean anomaly $M$ is given by
\begin{align}
dM = &\bigg[\sqrt{\frac{\mu }{a^3}} -2\sqrt{\frac{a}{\mu}}\frac{(1-e^2)}{1+e\cos f} \bR  + \frac{a(1-e^2)^{3/2}\sin(2(f+\omega))}{4\mu(1+e\cos f)^2}\tN^2 \nonumber\\
&+\frac{a \sqrt{1-e^2}\sin f}{2 \mu  (1+e \cos f)}\bigg(\left(2e -\cos f(1+e \cos f)\right) \tR^2 \nonumber \\
&+\frac{ (2+e \cos f)}{(1+e\cos f)} (\cos f (2+e \cos f)+e)\tT^2 \nonumber \\
&+2\sin f (2+e \cos f)\tR\cdot\tT\bigg)\bigg]dt \nonumber \\
&-2\sqrt{\frac{a}{\mu}}\frac{(1-e^2)}{1+e\cos f} \tR \cdot d\textbf{B}-\sqrt{1-e^2}\left(d\omega+\cos i d\Omega\right) \label{dM}
\end{align}
\end{lemma}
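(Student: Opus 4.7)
The plan is to apply It\^o's formula to Kepler's equation $M = \epsilon - e\sin\epsilon$, viewing $M$ as a $C^2$ function of the two It\^o processes $e$ and $\epsilon$. This immediately yields
\begin{equation*}
dM = (1 - e\cos\epsilon)\,d\epsilon - \sin\epsilon\,de + \tfrac{1}{2}\,e\sin\epsilon\,(d\epsilon)^2 - \cos\epsilon\,de\cdot d\epsilon,
\end{equation*}
so that the problem reduces to computing $d\epsilon$, since $de$ is already supplied by the eccentricity lemma.

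To obtain $d\epsilon$, I would use the geometric identity $r = a(1-e\cos\epsilon)$, inverted as $\cos\epsilon = (a-r)/(ae)$, and apply It\^o's formula treating $a$, $e$, $r$ as It\^o processes. The differentials $da$, $de$ are supplied by the preceding lemmas, while $dr = v\,dt$ is pure drift with no Brownian part, so only the quadratic variations $(da)^2$, $(de)^2$, $da\cdot de$ enter the correction. After solving for $d\epsilon$ and reducing all products via $dB_i\,dB_j=\delta_{ij}\,dt$, I would rewrite everything in terms of the true anomaly $f$ using \eqref{eqsineps}, $r/a = (1-e^2)/(1+e\cos f)$, and $H = \sqrt{\mu a(1-e^2)}$. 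The unperturbed kinematic identity $(1-e\cos\epsilon)\dot\epsilon = n = \sqrt{\mu/a^3}$ then produces the leading mean-motion term in the drift of $dM$.

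The trailing expression $-\sqrt{1-e^2}(d\omega + \cos i\,d\Omega)$ in the statement is a signal that certain angular contributions should be grouped rather than expanded: the combination $\omega + \Omega\cos i$ is the natural angular part of the longitude of pericenter when the orbital plane is tilted, and it enters because $f = \theta - \omega$ while $\theta$ itself is measured from the moving line of nodes, whose motion is governed by $\Omega$ and $i$. I would identify this package at the end of the computation by comparing coefficients with the already-derived formulas for $d\omega$ and $d\Omega$, rather than substituting them out.

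The main obstacle I anticipate is bookkeeping rather than anything conceptual. Each cross-product $\tR\cdot\tT$, $\tR^2$, $\tT^2$, $\tN^2$ appearing in the diffusion matrix of $(a,e)$ has to be propagated through the chain rule for $\cos\epsilon$ and then through the It\^o correction $(d\epsilon)^2$ and $de\cdot d\epsilon$. In particular, the $\tN^2$ contribution reaches the drift of $dM$ only indirectly, through the angular package $d\omega + \cos i\,d\Omega$, whose own $\tN^2$ coefficient survives the $\sqrt{1-e^2}$ prefactor. Once everything is assembled, collapsing the rational trigonometric expressions to the compact form shown in the lemma is routine, relying on the conic-section identities and repeated use of double-angle formulas.
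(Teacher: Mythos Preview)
Your route through Kepler's equation $M=\epsilon-e\sin\epsilon$ with $d\epsilon$ extracted from $\cos\epsilon=(a-r)/(ae)$ is sound and will produce the correct $dM$, but it is not the paper's route. The paper applies It\^o's formula directly to the representation \eqref{eqkepbis}, treating $M$ as a $C^2$ function of $(e,f)$ rather than $(e,\epsilon)$. Since $df$ was already computed in full in Section~\ref{dem_dw} as a by-product of deriving $d\omega$, and since it was rewritten there in the form $df = n\,dt + (\tN^2\text{-term})\,dt - (d\omega+\cos i\,d\Omega)$, the angular package in the statement is inherited directly from the coefficient $-\frac{(1-e^2)^{3/2}}{(1+e\cos f)^2}\,df$ rather than reconstructed by coefficient-matching after the fact.

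Your approach has the conceptual advantage of making it transparent that $dM$ carries \emph{no} net $\tN$-dependence, because $a$, $e$, and $r$ all have diffusions built solely from $\tR$ and $\tT$. But precisely for that reason the package $d\omega+\cos i\,d\Omega$ will not appear anywhere in your raw computation; you will obtain a fully expanded expression in $\bR$, $\bT$, $\tR^2$, $\tT^2$, $\tR\cdot\tT$ only, and must then add and subtract $\sqrt{1-e^2}(d\omega+\cos i\,d\Omega)$ by hand to reach the stated form. Your remark that the $\tN^2$ contribution ``reaches'' $dM$ through the package is therefore slightly off: the explicit $\tN^2$ term in the lemma is present solely to cancel the $\tN^2$ hidden inside $-\sqrt{1-e^2}(d\omega+\cos i\,d\Omega)$, so the net $\tN^2$ content of $dM$ is zero --- exactly what your method would discover directly.
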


The proof is given in Section \ref{dem_dM}. \\

Together, the six equations \eqref{da}, \eqref{de}, \eqref{di}, \eqref{dOmega}, \eqref{domega} and \eqref{dM} are called the \emph{stochastic Gauss Equations} in terms of orbital elements $a,e,i,\Omega,\omega,M$.

\section{An example with numerical simulations: Motion of a satellite undergoing stochastic dissipation}
In this section we give an example of a stochastic perturbation of the two-body problem in order to illustrate the stochastic Gauss equations. \\

We consider the following perturbed problem
\begin{equation}
\ddot{\textbf{r}}=-\frac{\mu}{r^3}\textbf{r} + (\alpha_0+\alpha W_1)\frac{ \textbf{v}}{\|\textbf{v}\|} + (\beta_0+\beta W_2) \frac{\textbf{H}}{\|\textbf{H}\|} \label{pb_pert}
\end{equation}
where $\alpha_0,\beta_0,\alpha$ and $\beta$ are real constants, $W_1$ and $W_2$ are two white noises. \\

This perturbed problem can bee seen, for example, as a satellite moving around the Earth which undergoes atmospheric dragging and with normal perturbation. Such perturbations can be induced by the Earth's atmosphere, the Earth's magnetic field fluctuations, the radiation pressure, thermic dissipation etc. The two white noises model the highly fluctuations induced by the phenomena considered. Such considerations are the same in the approach of Sagirow's satellite problem (see \cite{sagirow}). By definition of the vector \textbf{v} and \textbf{H}, we have
\begin{equation}
(\alpha_0+\alpha W_1)\frac{ \textbf{v}}{\|\textbf{v}\|} = \frac{(\alpha_0+\alpha W_1)}{\sqrt{1+e^2+2e\cos f}}\left(e\sin f \textbf{e}_R + (1+e\cos f) \textbf{e}_T\right)
\end{equation}
and
\begin{equation}
(\beta_0+\beta W_2) \frac{\textbf{H}}{\|\textbf{H}\|} = (\beta_0+\beta W_2) \textbf{e}_N.
\end{equation}
Thus, the perturbed acceleration $\displaystyle \frac{d\textbf{v}_P}{dt}$ can be written as
\begin{equation}
\frac{d\textbf{v}_P}{dt}=\left(\begin{array}{c} \frac{\alpha_0 e\sin f}{\sqrt{1+e^2+2e\cos f}} \\ \frac{\alpha_0(1+e\cos f)}{\sqrt{1+e^2+2e\cos f}}\\ \beta_0 \end{array}\right) + \left(\begin{array}{c} \frac{\alpha e\sin f}{\sqrt{1+e^2+2e\cos f}} W_1 \\ \frac{\alpha (1+e\cos f)}{\sqrt{1+e^2+2e\cos f}} W_1 \\ \beta  W_2\end{array}\right) 
\end{equation}
where the vectors are expressed in the basis $\{\textbf{e}_R,\textbf{e}_T,\textbf{e}_N\}$. Assuming $W_1$ and $W_2$ are the components of a two dimensional white noise $\textbf{W}=(W_1,W_2)^\mathsf{T}$ then, the It\^o's interpretation of white noises leads to the following stochastic differential equations for $\textbf{v}_P$:
\begin{equation}
d\textbf{v}_P = \left(\begin{array}{c}\frac{\alpha_0 e\sin f}{\sqrt{1+e^2+2e\cos f}} \\ \frac{\alpha_0 (1+e\cos f)}{\sqrt{1+e^2+2e\cos f}}\\ \beta_0 \end{array}\right)dt + \left(\begin{matrix} \frac{\alpha e\sin f}{\sqrt{1+e^2+2e\cos f}} &  0 \\   \frac{\alpha (1+e\cos f) }{\sqrt{1+e^2+2e\cos f}}&  0 \\ 0 &  \beta \end{matrix}\right) \cdot d\textbf{B}
\end{equation}
where $\mathbf{B}=(B_1,B_2)^\mathsf{T}$ is a two dimensional Brownian motion. The only non-vanishing products of $\tR$, $\tT$ and $\tN$ are
$\displaystyle \tR^2=\frac{\alpha^2 e^2\sin^2 f}{1+e^2+2e\cos f}$, $\displaystyle \tT^2=\frac{\alpha^2 (1+e\cos f)^2}{1+e^2+2e\cos f}$, $\displaystyle \tN^2=\beta^2$ and $\displaystyle \tR\cdot\tT=\frac{\alpha^2 e\sin f(1+e\cos f)}{1+e^2+2e\cos f}$. \\

In order to study the stochastic Gauss equations associated to this problem, we perform numerical simulations. These simulations are done over a period $T=50$ with a time step of $h=10^{-2}$, using a stochastic weak order two method given in \cite[Chapter 5, Equation 2.1]{kloeden2} and implemented in a FORTRAN program. For a review of numerical simulations of stochastic differential equations, we refer to \cite{higham} and \cite{kloeden2}. We also refer to \cite{cpp} and \cite{bcp} for other examples of simulations of stochastic perturbations. The distance and time units are chosen to be the canonical units AU and TU. In that case $\mu=1$ (see \cite{bate71}). The initial conditions for the motion are chosen such that at time $t=0$, the orbiting body is in an elliptical configuration with $r=1 \ \mathrm{AU}, \theta= 1 \ \mathrm{rad}, v=0.01 \ \mathrm{AU/TU}$ and $w=1.1 \ \mathrm{rad/TU}$.\\

We decompose the problem in two cases: a first with only the deterministic part and a second, with the deterministic and the stochastic part. In all the orbital elements figures, we plot in green their unperturbed value and in red their perturbed one. \\

\noindent\textbf{First case}: $\alpha=\beta=0$ and $\alpha_0=-2\times 10^{-2}, \beta_0=10^{-2}$. \\

We display in Figure \ref{cas_perturbe_det} the perturbed two-body motion in that case with two different views. In Figure \ref{gauss_det}, we display the variations of $a,e,i,\Omega$ and $\omega$. In that case, it is known (see for example \cite{mavraganis} and references therein) the orbit is spiraling in its orbit plane. The perturbation due to $\beta_0$ induces a rotation of the orbit plane. As we can see, the eccentricity increase with decaying oscillations which make the osculating orbit tending to a more and more elongate ellipse but with its major axis decreasing. This is clearly the effect of the dissipation.\\

\begin{figure}[ht!]
	\begin{center}
		\subfloat{
			\includegraphics[width=0.5\textwidth]{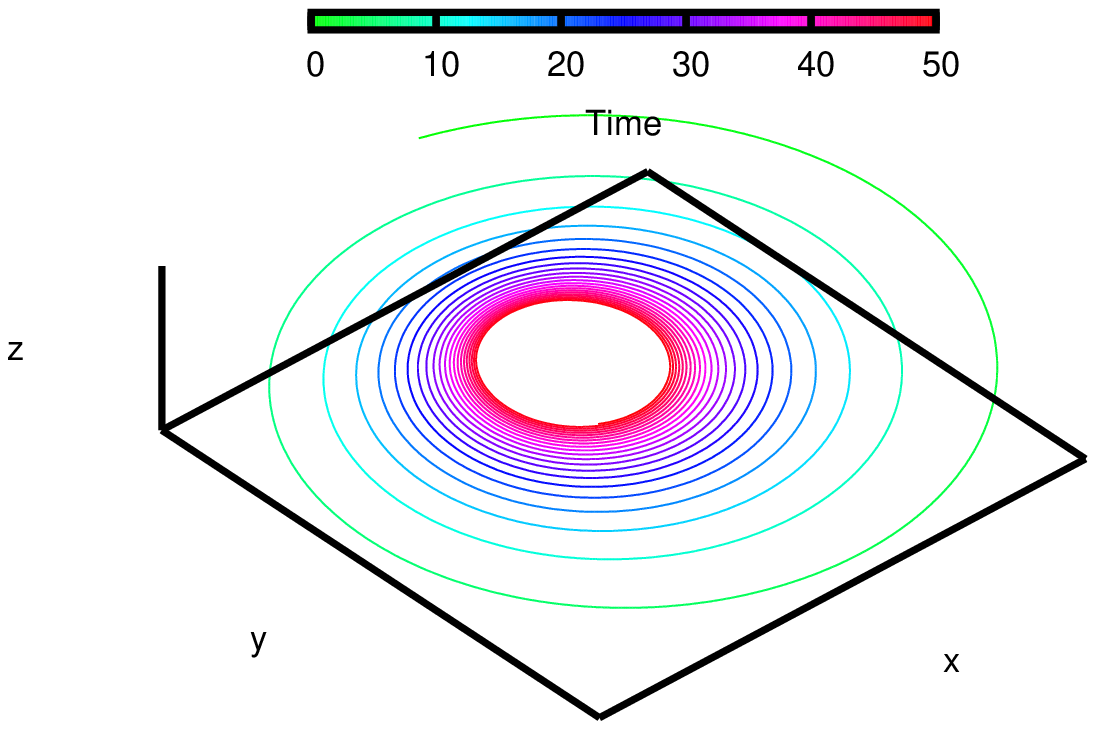}
		}
		\subfloat{
			\includegraphics[width=0.5\textwidth]{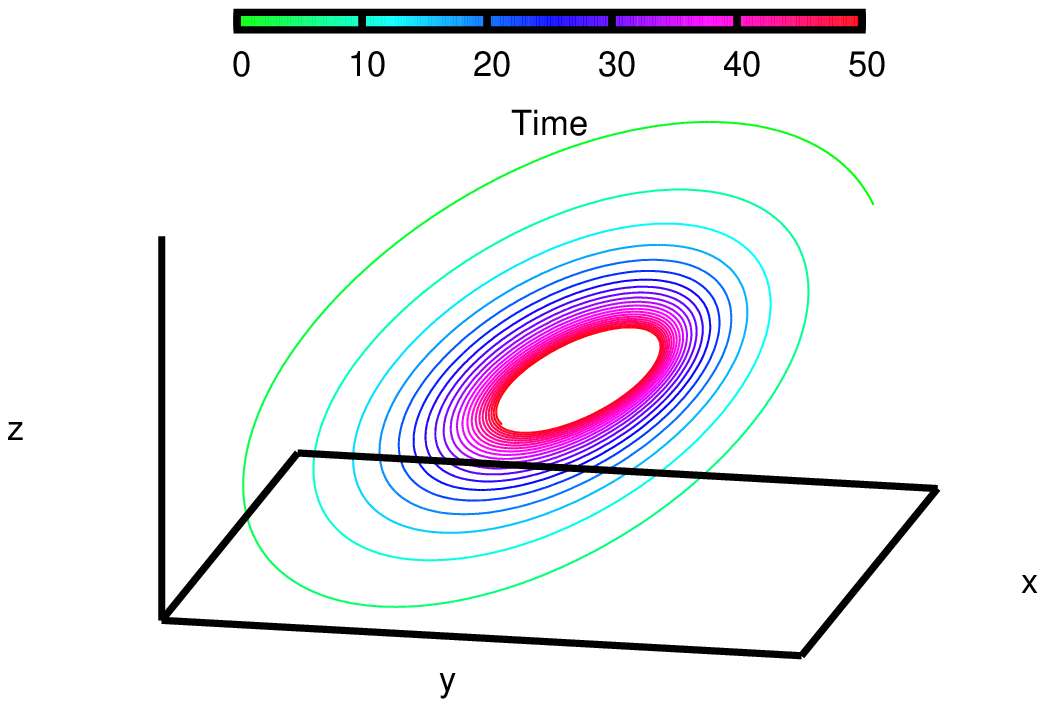}
		}
		\caption{Numerical simulations of the first case with two different views.}
		\label{cas_perturbe_det}
	\end{center}
\end{figure}

\begin{figure}[ht!]
	\resizebox{\textwidth}{!}{
		\begin{tabular}{rr}
			\subfloat{\includegraphics{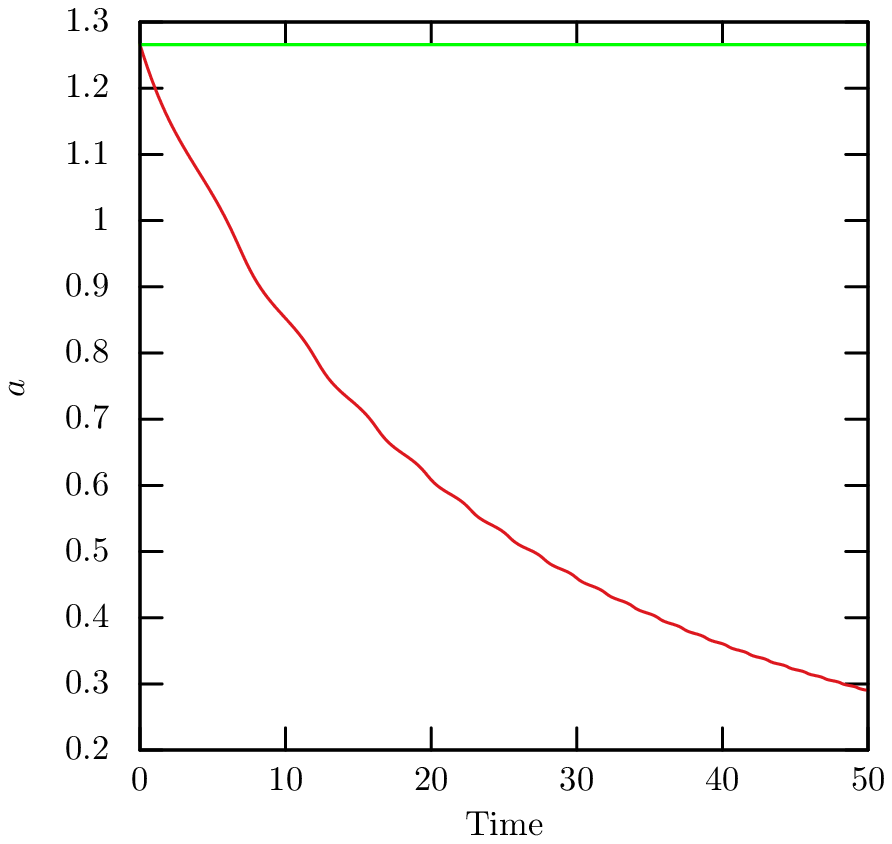}}& \subfloat{\includegraphics{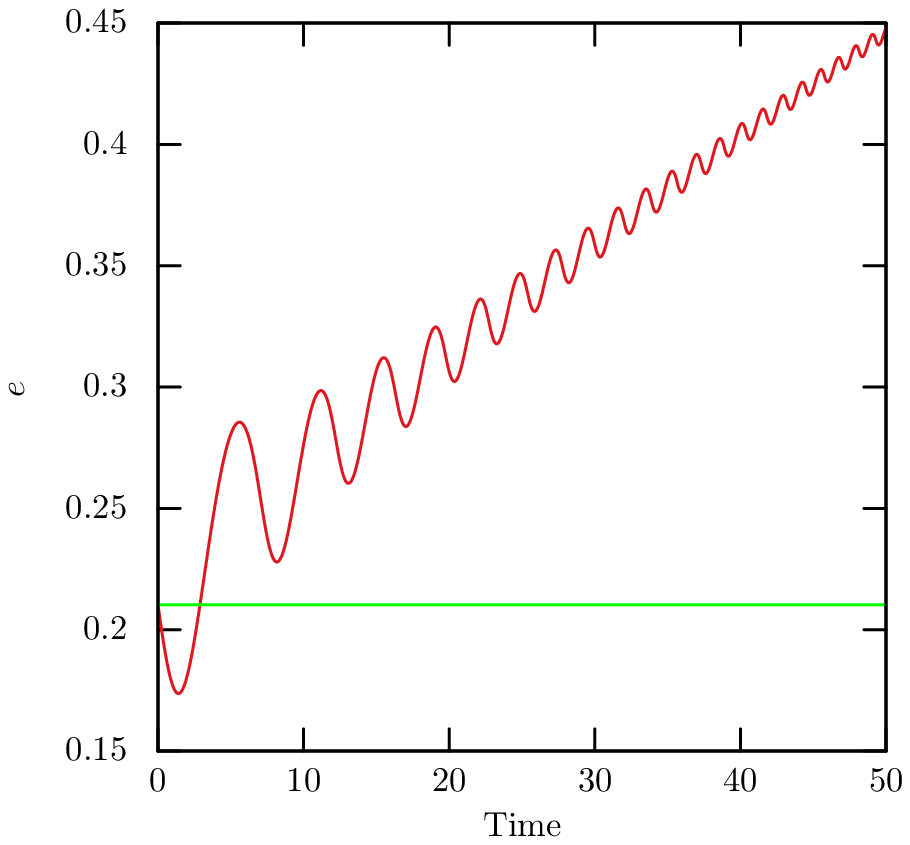}} \\
			
			\subfloat{\includegraphics{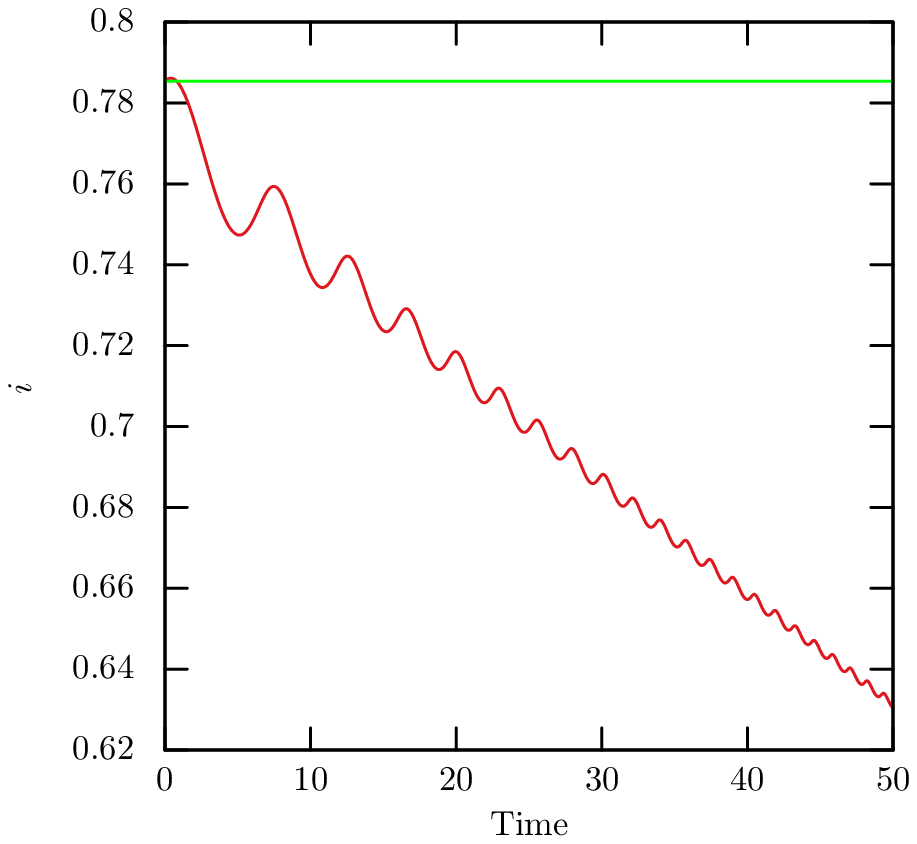}}& \subfloat{\includegraphics{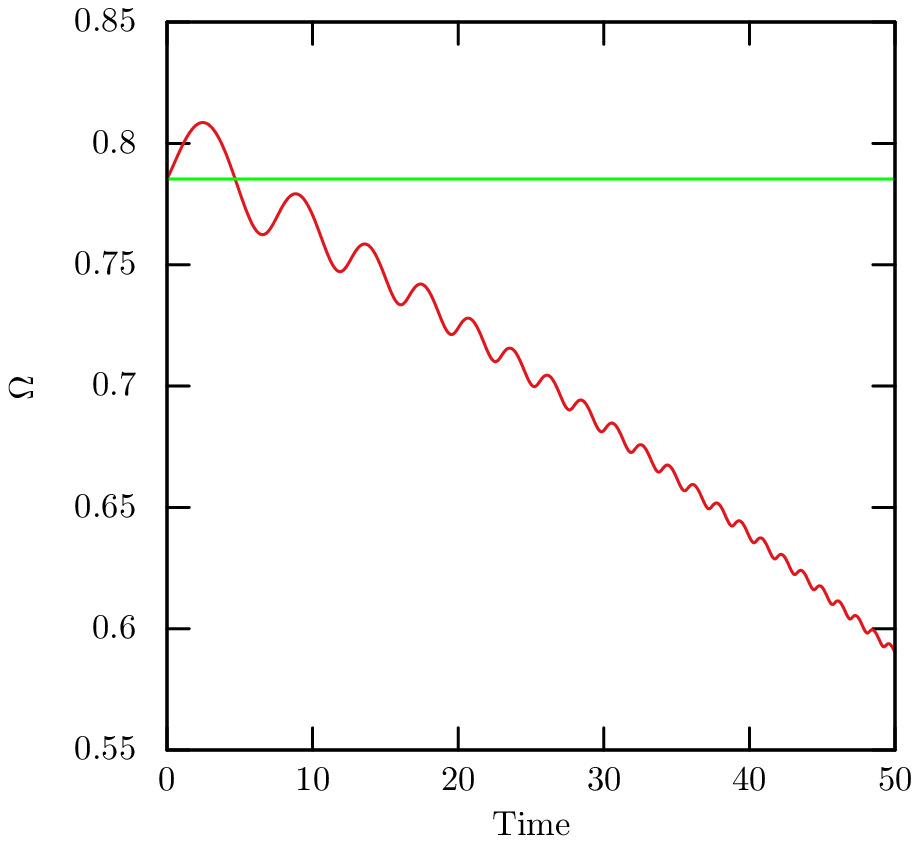}}\\
			
			\multicolumn{2}{c}{\subfloat{\includegraphics{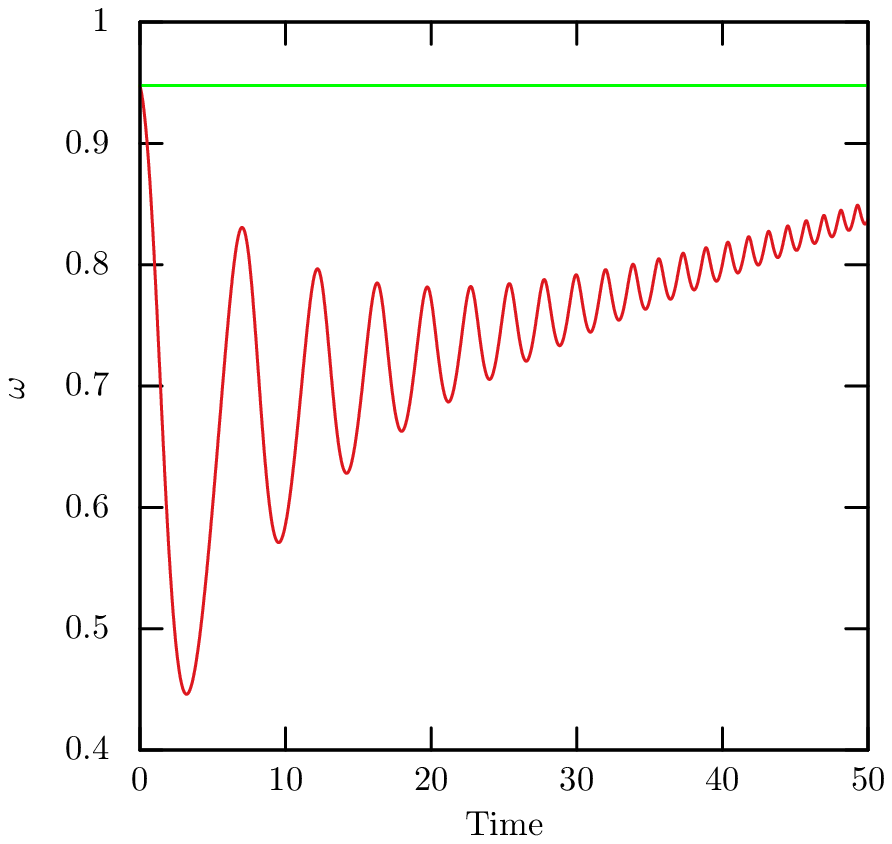}}} \\
		\end{tabular}
	}	
	\centering\caption{First case: Orbital elements.}
	\label{gauss_det}
\end{figure}

\noindent \textbf{Second case}: $\alpha_0=\alpha=-2\times 10^{-2}$ and $\beta_0=\beta=10^{-2}$.\\

The stochastic nature of the perturbation allows multiple realizations. In consequence, a lot of behavior for the motion can exist. We display in Figure \ref{cas_perturbe}, two examples of this case. In order to find the mean behavior of the orbital elements, in the probabilistic sense, we compute the expectation of the orbital elements using a Monte-Carlo method with $10^5$ realizations of Brownian motion. The expectations of the variations of $a,e,i,\Omega$ and $\omega$ obtained with the stochastic Gauss equations are given in Figure \ref{Egauss_detsto}. As we can see with these choices of coefficients, around $T=10$, the stochastic component of the perturbation begins to annihilate the periodic variations of the orbital elements, notably for $e,i,\Omega$ and $\omega$. Moreover, we can see that the orbital elements varying faster. Notably for the eccentricity and the pericenter, the stochastic part makes them drifting quickly than the deterministic case. \\

Even if the coefficient of the deterministic and the purely stochastic part are the same, only the square of the stochastic part remains due to the fact that It\^o's integral, and more precisely, the integral along the variation of the Brownian motion, vanishes in expectation (see \cite[Theorem 3.7, p.22]{oksendal}). In consequence, even if the deterministic part is more important than the stochastic one in term of magnitude, the purely stochastic part induces a non negligible effect on the dynamics. \\

\begin{figure}[ht!]
	\begin{center}
		\subfloat{
			\includegraphics[width=0.5\textwidth]{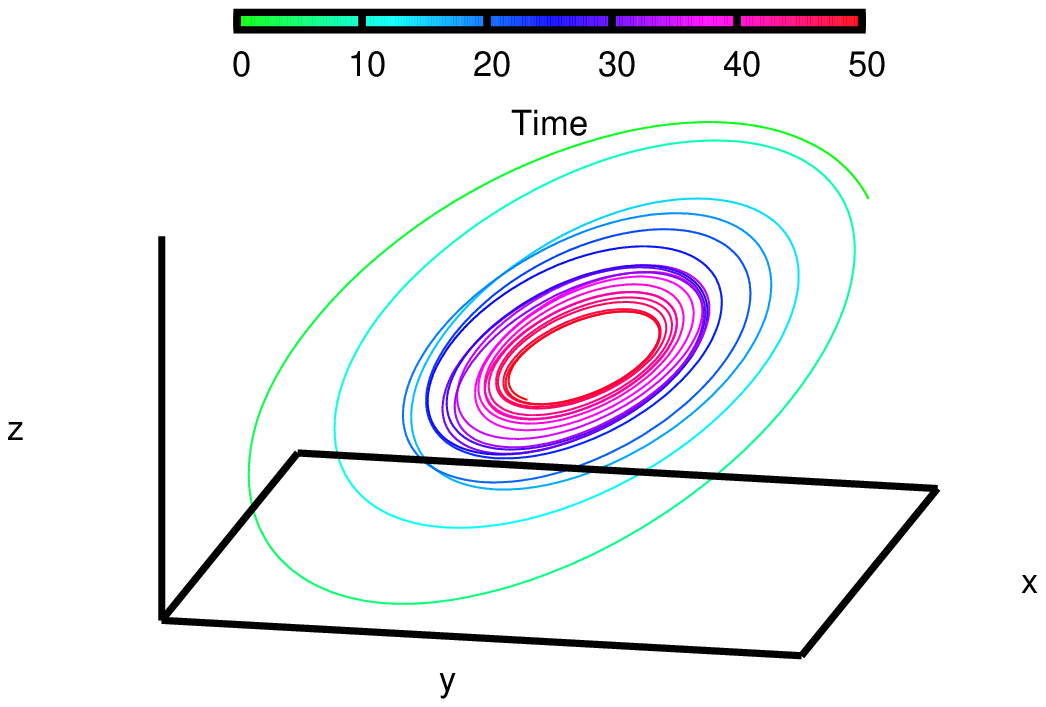}
		}
		\subfloat{
			\includegraphics[width=0.5\textwidth]{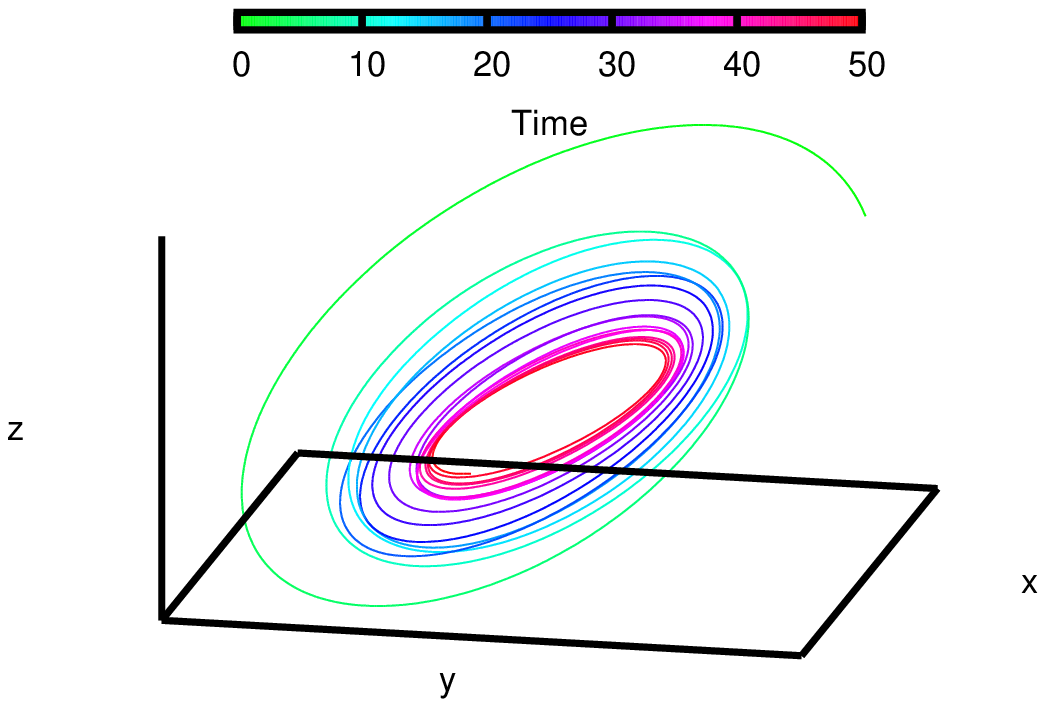}
		}
		\caption{Numerical simulations of two examples of the second case.}
		\label{cas_perturbe}
	\end{center}
\end{figure}

\begin{figure}[ht!]
		\resizebox{\textwidth}{!}{
			\begin{tabular}{rr}
				\subfloat{\includegraphics{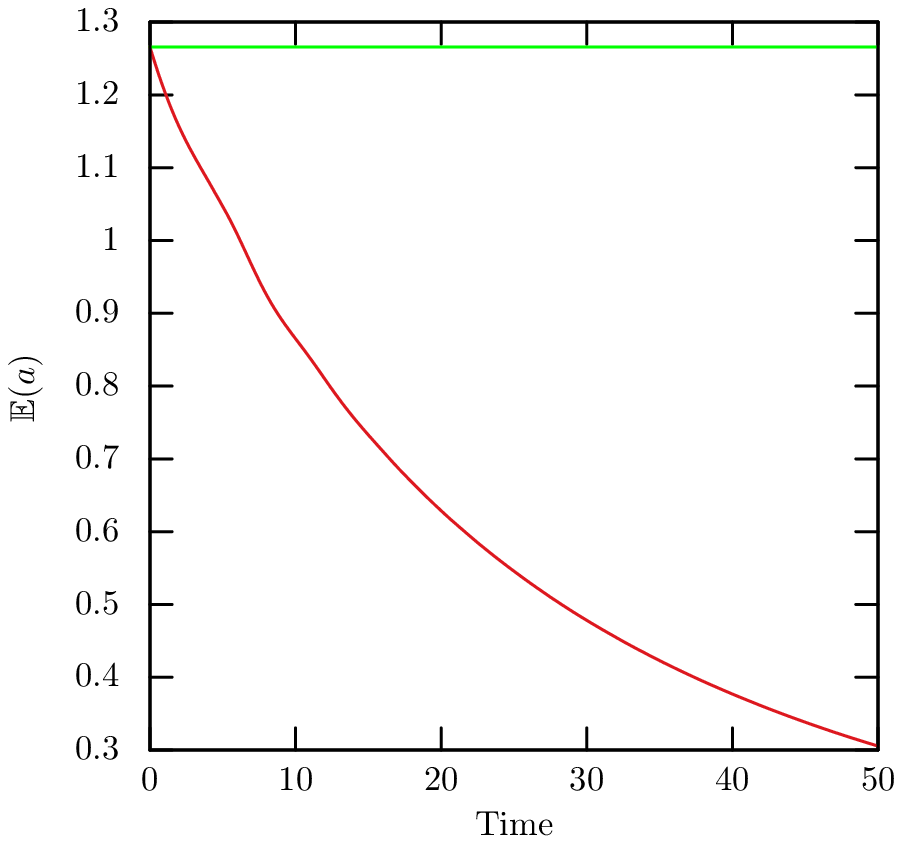}}&	\subfloat{\includegraphics{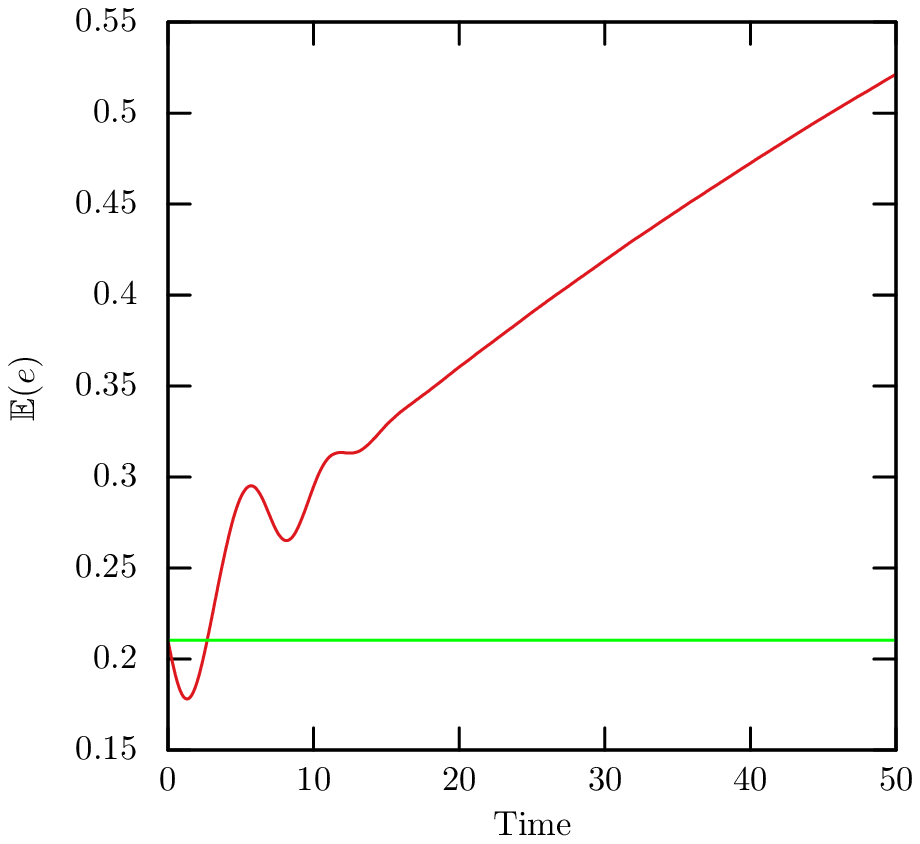}} \\
				
				\subfloat{\includegraphics{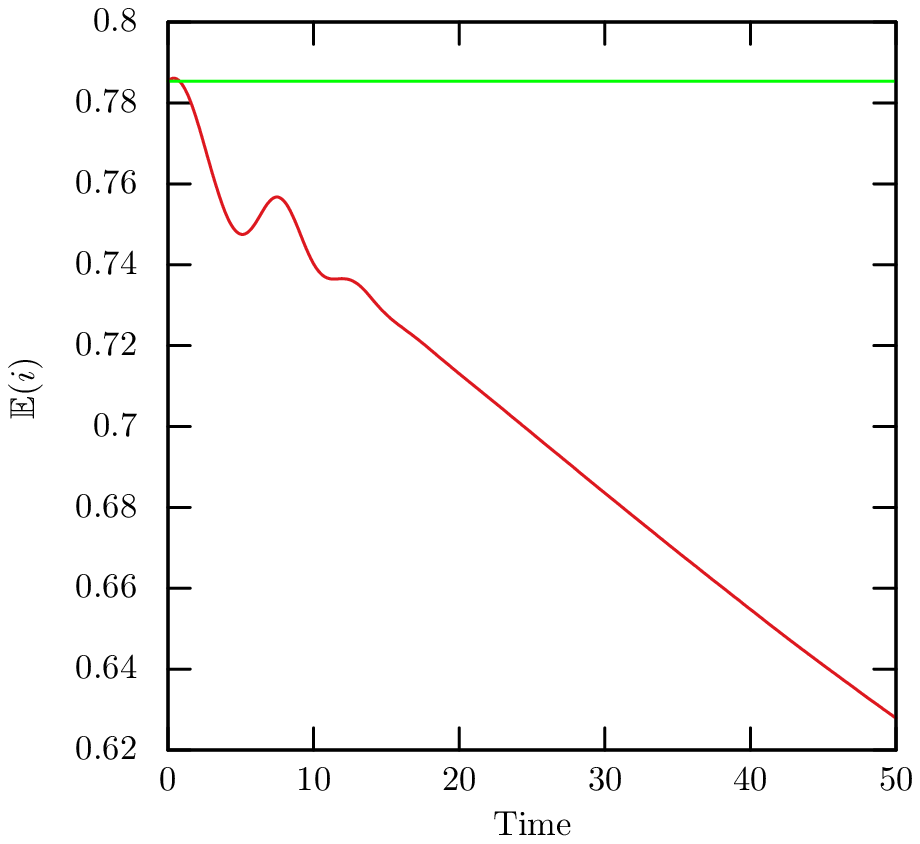}}& \subfloat{\includegraphics{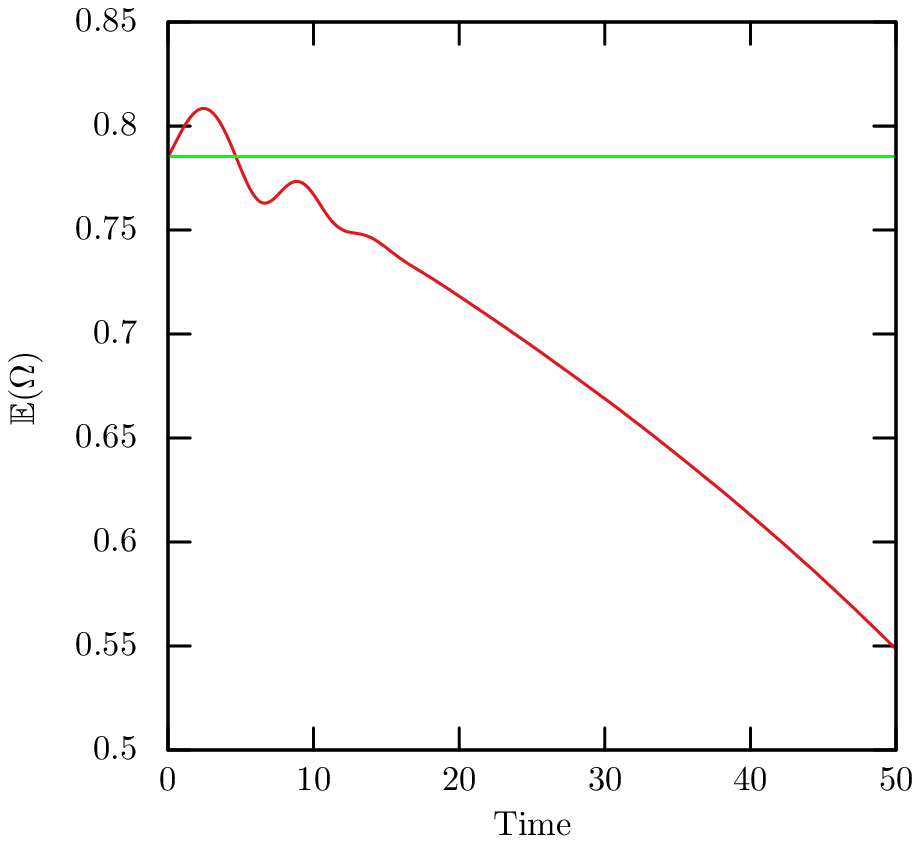}}\\
				
				\multicolumn{2}{c}{\subfloat{\includegraphics{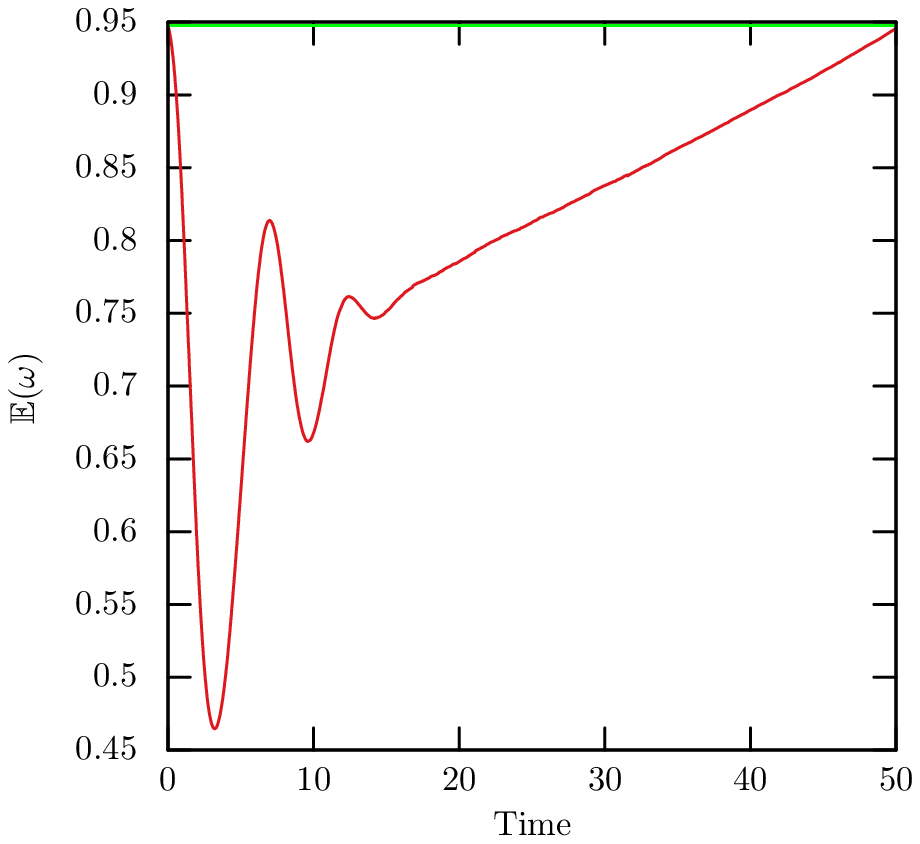}}} \\
			\end{tabular}
		}	
		\centering \caption{Second case: Expectation of the orbital elements.}
		\label{Egauss_detsto}
\end{figure}

On this example, and the one concerning the perturbation of the two-body problem (see \cite{cpp}), we can see that the use of It\^o's interpretation of white noises allows obtaining all the information contains in these objects. Indeed, considering white noises as basic functions of the time in the classical Gauss equations induces a lost of the information contained especially in the second derivatives. Whereas, we saw that it produces a non negligible effect on the dynamics and especially on the probabilistic mean behavior in the second case.
 
\section{A further extension with the Laplace-Runge-Lenz vector $\textbf{A}$}
We saw that the derivation of variations of orbital elements can be done with the angular momentum per unit of mass and the energy per unit of mass. Instead of the energy per unit of mass, we can use the Laplace-Runge-Lenz vector. Indeed, on the unperturbed orbit, this vector is defined by 
\begin{equation}
\label{runge-lenz}
\textbf{A}= \textbf{v} \times \textbf{H}  - \mu \frac{\textbf{r}}{r}
\end{equation}
and is conserved. The Laplace-Runge-Lenz gives the following relations $A = \mu e$, $E= \frac{A^2-\mu^2}{2H^2}$ and $a= \frac{\mu H^2}{\mu^2 - A^2}$ where $A$ is the norm of $\textbf{A}$. In consequence, the semi-major axis and the eccentricity are directly related to the Laplace-Runge-Lenz vector. It contains also the information of the pericenter location. Indeed, if the inclination is not zero we have
\begin{equation}
\tan \omega = \frac{H A_z}{H_x A_y - H_y A_x}
\end{equation}
and if the inclination is zero we have
\begin{equation}
\tan (\Omega + \omega) = \frac{A_y}{A_x}.
\end{equation}
In \cite{cpp}, this last relation were used, assuming that $\Omega=0$ , in order to derive the variation of the pericenter angle in the planar case. \\

We can also use the angular momentum vector per unit of mass and the Laplace-Runge-Lenz vector instead of the orbital elements $a$,$e$,$i$,$\Omega$ and $\omega$ as in \cite{roy-moran} for the deterministic case. Indeed, the equations governing the variations of these two vectors, hold for all kind of orbits. Thus, it is straightforward to derive the equations governing the variations of the orbital element. Even if the two vectors provide six components, they are not independent but related by the expression $\textbf{H}\cdot\textbf{A}=0$. In consequence, depending on which problem is studied, multiple choices are possible for the last element such as the true longitude which is the one chosen in \cite{roy-moran}. \\

We compute the variation of the Laplace-Runge-Lenz vector in order to have the set of perturbed equations $d\textbf{H}$ and $d\textbf{A}$ in the stochastic case. Using It\^o's formula, we obtain
\begin{equation}
d\textbf{A} = d\textbf{v} \times \textbf{H} + \textbf{v} \times d\textbf{H} + d\textbf{v} \times d\textbf{H} - \frac{\mu}{r} d\textbf{r} + \mu \frac{\textbf{r}\cdot \textbf{v}}{r^3} \textbf{r} dt.
\end{equation}
Then, using the expressions of $d\textbf{v}$, we obtain
\begin{equation}
d\textbf{A} = d\textbf{v}_P \times \textbf{H} + \textbf{v} \times d\textbf{H} + d\textbf{v} \times d\textbf{H}.
\end{equation}
Using the expression of $d\textbf{H}$ and $d\textbf{v}_P$, we obtain
\begin{equation*}
d\textbf{v} \times d\textbf{H} = \left[\mathsf{Tr}\left(\tilde{\textbf{v}}_P^\mathsf{T}\cdot\tilde{\textbf{v}}_P\right) \cdot \textbf{r}-\left(\tilde{\textbf{v}}_P\cdot \tilde{\textbf{v}}_P^\mathsf{T}\right)\cdot \textbf{r}\right]dt.
\end{equation*}
In order to write in the differential form the variation of the Laplace-Runge-Lenz vector, we define the operator $L: \textbf{u} \longmapsto L(\textbf{u})$ for any three dimensional vector, where $L(\textbf{u})$ is a three dimension square matrix with
\begin{equation*}
L(\textbf{u})=\left(\begin{array}{ccc}
0 & -u_3 & u_2\\
u_3  & 0 & -u_1 \\
-u_2 & u_1 & 0
\end{array}\right).
\end{equation*}
Then, for any another three dimensional vector $\textbf{v}$, we have $\textbf{u} \times \textbf{v} = L(\textbf{u})\cdot \textbf{v}$. Finally, we obtain
\begin{align*}
d\textbf{A} = &\left[\mathsf{Tr}\left(\tilde{\textbf{v}}_P^\mathsf{T}\cdot\tilde{\textbf{v}}_P\right) \cdot \textbf{r}-\left(\tilde{\textbf{v}}_P\cdot \tilde{\textbf{v}}_P^\mathsf{T}\right)\cdot \textbf{r} + \bar{\textbf{v}}_P \times \textbf{H} + \textbf{v}\times\left(\textbf{r}\times \bar{\textbf{v}}_P\right)\right]dt \\
& +\left[ \left(L(\textbf{v})\cdot L(\textbf{r})-L(\textbf{H})\right)\cdot\tilde{\textbf{v}}_P \right]\cdot d\textbf{B}.
\end{align*}
These last expression of Laplace-Runge-Lenz variation vector is also very convenient for numerical integration.

\section{Conclusion}

In this article, we have developed the stochastic perturbation equations of celestial mechanics which generalize the classical Gauss equations. This is done with the It\^o theory of stochastic differential equations and with basic considerations on the angular momentum and the energy per unit of mass. This approach allows predicting the impact of each components of the stochastic perturbing force on the dynamic. From a perturbing acceleration containing white noises, we showed the construction of the stochastic perturbation associated and we illustrated numerically the dynamic associated with the stochastic Gauss equations. Finally, we derived the variation of the Laplace-Runge-Lenz vector in order to obtain the minimum set of equations covering a large class of problem in celestial mechanics for further studies and applications.

\section{Acknowledgment}
I would like to thank the reviewers for their insightful comments on the paper which led me to an improvement of this work. I would also like to thank Jacky Cresson, Florent Deleflie and Lucie Maquet for their careful proofreading and discussions.

\appendix
\section{Proof of the stochastic Gauss equations}

In what follow, we always simplify computations in terms of orbital elements using the formulas from \eqref{eqr} to \eqref{eqsineps}. Moreover, we denote by $\tE=v\tR+rw\tT$ and $\tH=r\tT$ the stochastic part of the variation of the energy and the angular momentum (see Equation \eqref{dH} and \eqref{dE}). In the same way, we define for all the orbital elements and the angular momentum vector components, the quantities $\ta,\te,\ti,\tOmega,\tomega,\tH_x,\tH_y$ and $\tH_z$ to be the stochastic part in their variation.

\subsection{Semi-major axis $a$}
\label{dem_da}
We use the relation (\ref{eqEorb}) linking the energy $E$ and the semi-major axis $a$ in order to have 
\begin{align}
a=-\frac{\mu}{2E}.
\end{align}
Using It\^o's formula on the previous equation gives
\begin{align*}
da = \frac{\mu}{2 E^2}dE - \mu \frac{\tE\cdot\tE}{2 E^3}dt.
\end{align*}
Using the expression of the variation of the energy $E$ we obtain the result for $da$.

\subsection{Eccentricity $e$}
\label{dem_de}
Using It\^o's formula on Equation (\ref{eqe}), we obtain
\begin{align*}
de=\frac{2HE}{e\mu^2}dH+\frac{H^2}{e\mu^2}dE + \left(\frac{E}{e^3\mu^2}\tH\cdot\tH - \frac{H^4}{2e^3\mu^4}\tE\cdot\tE + \frac{2H(H^2E+\mu^2)}{e^3\mu^4}\tH\cdot\tE\right)dt.
\end{align*}
First, notice that
\begin{align*}
1+e\cos f-\frac{(1-e^2)}{1+e\cos f}=e\left(\cos f + \frac{e+\cos f}{1+e\cos f}\right)
\end{align*}
then
\begin{align*}
\frac{2HE}{e\mu^2}dH+\frac{H^2}{e\mu^2}dE = &\left[ \sqrt{\frac{a(1-e^2)}{\mu}}\left(\sin f \bar{R} + \left(\cos f + \frac{e+\cos f}{1+e\cos f}\right)\bar{T} \right) + \frac{a(1-e^2)}{2e\mu} \left(\tR^2+\tT^2\right)\right]dt \\
& +\sqrt{\frac{a(1-e^2)}{\mu}}\left(\sin f \tR + \left(\cos f + \frac{e+\cos f}{1+e\cos f}\right)\tT \right)\cdot d\textbf{B} .
\end{align*}
Second, using the expression of $dH$ and $dE$ we obtain 
\begin{align*}
\tH\cdot\tH &= \frac{a^2(1-e^2)^2}{(1+e\cos f)^2}\tT^2, \\
\tE\cdot\tE &= \frac{\mu e^2\sin^2 f}{a(1-e^2)}\tR^2 + \frac{\mu(1+e\cos f)^2}{a(1-e^2)}\tT^2+\frac{2e\mu\sin f (1+e\cos f)}{a(1-e^2)}\tR\cdot\tT,\\
\tH\cdot\tE &= \sqrt{\mu a(1-e^2)}\tT^2+\frac{e\sin f \sqrt{\mu a(1-e^2)}}{1+e\cos f}\tR\cdot\tT.
\end{align*}
Finally, after simplifications we obtain the result for $de$.

\subsection{Inclination $i$ and Ascending node $\Omega$}
\label{dem_didW}
In what follows, we assume that $i$ is not equal to zero. The variation of the inclination and the ascending node are related to the variation of the angular momentum vector $\textbf{H}$. We compute firstly the variation of the vector $\textbf{H}$. Using It\^o's formula, we obtain
\begin{align*}
d \textbf{H} = d\textbf{r} \times \textbf{v} + \textbf{r} \times d\textbf{v} + d\textbf{r} \times d\textbf{v}.
\end{align*}
Then, using the perturbed equations of motions (\ref{dv})-(\ref{dw}) we obtain
\begin{equation}
d\textbf{H} = \textbf{r} \times d\textbf{v}_P \label{dvecH}.
\end{equation}
Finally,
\begin{equation}
d\textbf{H} = -r(\bN dt+\tN \cdot d\textbf{B}) \textbf{e}_T + r(\bT dt +\tT \cdot d\textbf{B} ) \textbf{e}_N.
\end{equation}
The expression of $d \textbf{H}$ in the inertial frame is obtained as using three rotations (see Figure \ref{fig1})
\begin{align}
d \textbf{H} &=r \bigg[\bigg( \sin i \sin \Omega (\bT dt  + \tT \cdot d\textbf{B} ) \nonumber \\
&+ (\bN dt + \tN\cdot d\textbf{B} ) (\cos i \sin \Omega \cos (f+\omega)+\cos \Omega \sin (f+\omega))\bigg)\textbf{e}_x  \nonumber \\
&-\bigg( \sin i \cos \Omega (\bT dt  + \tT \cdot d\textbf{B} ) \nonumber \\
&+ (\bN dt + \tN\cdot d\textbf{B} ) (\cos i \cos \Omega \cos (f+\omega)-\sin \Omega \sin (f+\omega))\bigg)\textbf{e}_y  \nonumber \\
&+\bigg( \cos i (\bT dt  + \tT \cdot d\textbf{B} )- \sin i \cos (f+\omega) (\bN dt + \tN\cdot d\textbf{B} )\bigg)\textbf{e}_z \bigg] \nonumber 
\end{align}
with
\begin{align*}
\tH_x &=r \left( \sin i \sin \Omega \tT + \tN \left(\cos i \sin \Omega \cos (f+\omega)+\cos \Omega \sin (f+\omega)\right)\right),  \\
\tH_y &=r\left(-\sin i \cos \Omega \tT - \tN \left(\cos i \cos \Omega \cos (f+\omega)-\sin \Omega \sin (f+\omega)\right)\right),  \\
\tH_z &=r\left(\cos i\tT - \sin i \cos (f+\omega)\tN \right).
\end{align*}
Now we can compute the variation of the inclination $i$. Using It\^o's formula on Equation (\ref{eqcosi}), we obtain
\begin{align*}
-\sin idi-\frac{1}{2} \cos i (\ti\cdot\ti) dt= \frac{H_z}{H^2}dH-\frac{dH_z}{H}+ \left(\frac{\tH_z \cdot \tH}{H^2}-\frac{H_z(\tH\cdot\tH)}{H^3}\right)dt
\end{align*}
and so
\begin{align*}
di = -\frac{H^3 \cos i (\ti \cdot \ti)-2 H (\tH \cdot \tH_z )+2 H_z (\tH \cdot \tH)}{2 H^3 \sin{i}} dt + \frac{H_z}{H^2 \sin{i}}dH -\frac{1}{H \sin{i}} dH_z.
\end{align*}
Using the expression of $dH_z$ and $dH$, we finally obtain the result for $di$. Next, we compute the variation of the ascending node $\Omega$. Using It\^o's formula on Equation (\ref{eqtanOmega}), we obtain
\begin{align*}
\frac{d\Omega}{\cos^2 \Omega} + \frac{ (\tOmega \cdot \tOmega) \tan \Omega}{\cos^2 \Omega}dt = -\frac{dH_x}{H_y}+\frac{H_x}{H_y^2}dH_y + \left(\frac{\tH_x\cdot\tH_y}{H_y^2} -\frac{ H_x (\tH_y\cdot\tH_y)}{H_y^3}\right)dt
\end{align*}
and so
\begin{align*}
d\Omega &= \left(\frac{-H_x \cos ^2\Omega(\tH_y \cdot \tH_y)}{H_y^3}+ \frac{ \cos ^2\Omega(\tH_x \cdot \tH_y)}{H_y^2} -\tan \Omega(\tOmega \cdot \tOmega) \right) dt  \\
&-\frac{\cos ^2\Omega}{H_y}dH_x + \frac{H_x \cos ^2\Omega}{H_y^2} dH_y .
\end{align*}
Using the expression of $H_x$,$H_y$ and $dH_x$,$dH_y$, we can simplify the expression as
\begin{align*}
\frac{\cos^2\Omega}{H_y^2}(H_x dH_y - H_y dH_x) = \frac{r \sin(f+\omega)}{H\sin i}\bN dt + \frac{r \sin(f+\omega)}{H\sin i}\tN\cdot d\textbf{B}.
\end{align*}
After simplifications we obtain the result for $d\Omega$.

\subsection{Pericenter $\omega$}
\label{dem_dw}
In order to derive the variation of the pericenter location, we compute firstly the variation of the true anomaly $f$ and secondly the variation of the position angle $\theta$. Using It\^o's formula on Equation (\ref{eqr}), we obtain
\begin{align*}
\frac{2 H}{\mu r}dH +\left(\frac{\tH\cdot\tH}{\mu r}-\frac{H^2 v}{\mu r^2}\right)dt = \cos fde -e \sin fdf + \left(-\frac{1}{2} e \cos f (\tf\cdot\tf) -\sin f (\tE\cdot\tf)\right)dt
\end{align*}
and so
\begin{align*}
df= \left( \frac{H^2 v }{e \mu r^2 \sin{f}}-\frac{\tH^2 }{e \mu r\sin{f}}-\frac{\tE \cdot \tf}{e}-\frac{1}{2}\cot f (\tf\cdot\tf)\right) dt + \frac{\cot f}{e} de -\frac{2 H }{e \mu r \sin{f}} dH.
\end{align*}
Using the expression of the variation of the angular momentum $dH$, we obtain
\begin{align*}
df = \left( -\frac{2 H \bar{T} }{e \mu\sin{f}} -\frac{\tE \cdot \tf}{e}+\frac{H v w }{e \mu\sin{f}}-\frac{r \tT^2 }{e \mu\sin{f}}-\frac{1}{2} \cot f (\tf \cdot \tf)\right)dt -\frac{2 H}{e \mu \sin{f}}\tT \cdot d\textbf{B}  + \frac{\cot f}{e} de.
\end{align*}
Finally, using the expression of $de$ and after simplifications we obtain
\begin{align}
df &= \bigg[ \sqrt{\frac{a(1-e^2)}{\mu}}\frac{1}{e}\left(\cos f \bR - \sin f \left(\frac{2+e\cos f}{1+e\cos f} \right)\bT\right) + \frac{\sqrt{\mu}}{(a(1-e^2))^{3/2}}(1+e\cos f)^2  \nonumber \\
&+ \frac{a(1-e^2)}{\mu e^2}\bigg( -\frac{\sin 2f}{2} \tR^2 + \left(e+\cos f(2+e\cos f)^2 \right)\frac{\sin f}{(1+e\cos f)^2} \tT^2 \nonumber \\
&- \left(\frac{2+e\cos f}{1+e\cos f}\right)\cos 2f \tR \cdot \tT \bigg)\bigg]dt \nonumber \\
&+\sqrt{\frac{a(1-e^2)}{\mu}}\frac{1}{e}\left(\cos f \tR - \sin f \left(\frac{2+e\cos f}{1+e\cos f} \right)\tT\right) \cdot d\textbf{B} .
\end{align}
In order to compute the variation of the position angle, we use the z-component of the vector $d\textbf{r}$ and we use the It\^o's formula on the z-component of $d\textbf{r}$. We have
\begin{align*}
d(r \sin i \sin \theta)=(r w \sin i \cos \theta+v \sin i \sin \theta)dt
\end{align*}
which leads to
\begin{align*}
& r \cos i \sin \theta di +r \sin i \cos \theta d\theta  \\
&+\left(r \cos i \cos \theta(\ti\cdot\tthet)-\frac{1}{2} r \sin i \sin \theta(\ti\cdot\ti)-\frac{1}{2} r \sin i \sin \theta (\tthet\cdot\tthet)+v \sin i \sin \theta\right)dt  \\
&= (r w \sin i \cos \theta+v \sin i \sin \theta)dt.
\end{align*}
So we obtain
\begin{align*}
d\theta = \left(w-\cot i (\ti\cdot \tthet)+\frac{1}{2}\tan \theta (\ti \cdot \ti+\tthet \cdot \tthet) \right) dt -\cot i \tan \theta di.
\end{align*}
Using the expression of $di$ and after simplifications, we obtain
\begin{align}
d\theta &=\bigg[ -\sqrt{\frac{a(1-e^2)}{\mu}}\frac{\sin (f+\omega)\cot i}{1+e\cos f} \bN + \frac{\sqrt{\mu}}{(a(1-e^2))^{3/2}}(1+e\cos f)^2 \nonumber \\
&+\frac{a(1-e^2)}{2\mu (1+e\cos f)^2}\frac{\tan (f+\omega)}{\sin^2 i} \left( \cos^2 (f+\omega) \left(1 + \cos^2 i\right) +  \cos^2 i \right) \tN^2 \nonumber \\
&+\frac{a(1-e^2)}{\mu (1+e\cos f)^2} \cot i \sin (f+\omega) \tT \cdot \tN \bigg]dt \nonumber \\
&-\sqrt{\frac{a(1-e^2)}{\mu}}\frac{\sin (f+\omega)\cot i}{1+e\cos f} \tN \cdot d\textbf{B} .
\end{align}
Remarking that
\begin{equation}
d\theta=\frac{\sqrt{\mu}(1+e\cos f)^2}{(a(1-e^2))^{3/2}}dt +\frac{a(1-e^2)\sin(2(f+\omega))}{4\mu(1+e\cos f)^2} \tN^2 -\cos i d\Omega,
\end{equation}
we can deduce the variation of the pericenter location from the equation \eqref{eqf}.

\subsection{Mean anomaly $M$}
\label{dem_dM}
Using It\^o's formula on Equation (\ref{eqkepbis}), we obtain
\begin{align}
dM = &\left[\frac{(6 \cos f+e (5+\cos (2 f))) \sin f}{4 \sqrt{1-e^2} (1+e \cos f)^3} \left(\te \cdot \te\right) + \frac{e \left(1-e^2\right)^{3/2} \sin f}{(1+e \cos f)^3} \left(\tf\cdot\tf \right) \right. \nonumber \\
&\left. -\frac{\sqrt{1-e^2} \left(3 e+\left(2+e^2\right) \cos f\right)}{(1+e \cos f)^3} \left(\te \cdot \tf\right)	\right]dt \nonumber \\
& -\frac{\sqrt{1-e^2}\sin f (2+e \cos f)}{(1+e \cos f)^2}de-\frac{\left(1-e^2\right)^{3/2}}{(1+e \cos f)^2} df
\end{align}
with
\begin{align*}
\te \cdot \te = &\frac{\left(a \left(1-e^2\right) \sin ^2f\right) }{\mu }\tR^2+\frac{\left(a \left(1-e^2\right) (e+\cos f)^2\right)}{\mu  (1+e \cos f)^2}\tT^2 \\
&+\frac{\left(2 a \left(1-e^2\right) (e+\cos f) \sin f\right) }{\mu  (1+e \cos f)}\tR\cdot \tT \\
\end{align*}
\begin{align*}
\tf\cdot\tf =&\frac{a \left(1-e^2\right)\cos ^2f}{\mu e^2}\tR^2+\frac{a \left(1-e^2\right) (2+e \cos f)^2 \sin ^2f}{\mu e^2 (1+e \cos f)^2}\tT^2\\
&-\frac{a \left(1-e^2\right) (2+e \cos f) \sin (2 f)}{\mu e^2 (1+e \cos f)}\tR\cdot\tT \\
\end{align*}
\begin{align*}
\te \cdot \tf =& \frac{a \left(1-e^2\right) \cos f \sin f}{\mu  e}\tR^2-\frac{a \left(1-e^2\right)(e+\cos f) (2+e \cos f) \sin f}{2 \mu  e (1+e \cos f)^2}\tT^2 \\
&+\frac{a \left(1-e^2\right) \left(3 \cos (2 f)+2 e \cos ^3f-1\right)}{2 \mu  e (1+e \cos f)}\tR\cdot\tT.
\end{align*}
Remarking that
\begin{equation}
df=\frac{\sqrt{\mu}(1+e\cos f)^2}{(a(1-e^2))^{3/2}}dt +\frac{a(1-e^2)\sin(2(f+\omega))}{4\mu(1+e\cos f)^2} \tN^2 -(d\omega+\cos i d\Omega),
\end{equation}
we obtain after simplifications the result for $dM$.


\begin{thebibliography}{12}
	\providecommand{\natexlab}[1]{#1}
	\providecommand{\url}[1]{\texttt{#1}}
	\expandafter\ifx\csname urlstyle\endcsname\relax
	\providecommand{\doi}[1]{doi: #1}\else
	\providecommand{\doi}{doi: \begingroup \urlstyle{rm}\Url}\fi
	
	\bibitem[Bate et~al.(1971)Bate, Mueller, and White]{bate71}
	R.R. Bate, D.D. Mueller, and J.E. White.
	\newblock \emph{{F}undamentals of {A}strodynamics}.
	\newblock Dover Books on Aeronautical Engineering Series. Dover Publications,
	1971.
	
	\bibitem[{Behar} et~al.(2014){Behar}, {Cresson}, and {Pierret}]{bcp}
	E.~{Behar}, J.~{Cresson}, and F.~{Pierret}.
	\newblock {Dynamics of a rotating ellipsoid with a stochastic flattening}.
	\newblock 2014.
	
	\bibitem[{Burns}(1976)]{burns1976}
	J.A. {Burns}.
	\newblock {E}lementary derivation of the perturbation equations of celestial
	mechanics.
	\newblock \emph{American Journal of Physics}, 44\penalty0 (10):\penalty0
	944--949, 1976.
	
	\bibitem[Cresson(2011)]{cr}
	J.~Cresson.
	\newblock {The stochastisation hypothesis and the spacing of planetary
		systems}.
	\newblock \emph{{Journal of Mathematical Physics}}, 52\penalty0 (11):\penalty0
	113502, 2011.
	
	\bibitem[Cresson et~al.(2015)Cresson, Pierret, and Puig]{cpp}
	J.~Cresson, F.~Pierret, and B.~Puig.
	\newblock {T}he {S}harma-{P}arthasarathy stochastic two-body problem.
	\newblock \emph{Journal of Mathematical Physics}, 56\penalty0 (3), 2015.
	
	\bibitem[Higham(2001)]{higham}
	D.J. Higham.
	\newblock {A}n algorithmic introduction to numerical simulation of stochastic
	differential equations.
	\newblock \emph{SIAM review}, 43\penalty0 (3):\penalty0 525--546, 2001.
	
	\bibitem[Kloeden(1994)]{kloeden2}
	P.E. Kloeden.
	\newblock \emph{{Numerical solution of SDE through computer experiments}},
	volume~1.
	\newblock Springer, 1994.
	
	\bibitem[Mavraganis and Michalakis(1994)]{mavraganis}
	A.G. Mavraganis and D.G. Michalakis.
	\newblock {T}he two-body problem with drag and radiation pressure.
	\newblock \emph{Celestial Mechanics and Dynamical Astronomy}, 58\penalty0
	(4):\penalty0 393--403, 1994.
	
	\bibitem[Murray and Dermott(1999)]{murray_dermott}
	C.D. Murray and S.F. Dermott.
	\newblock \emph{{S}olar {S}ystem {D}ynamics}.
	\newblock Cambridge University Press, 1999.
	
	\bibitem[{\O}ksendal(2003)]{oksendal}
	B.~{\O}ksendal.
	\newblock \emph{{Stochastic differential equations}}.
	\newblock Springer, 2003.
	
	\bibitem[Roy and Moran(1973)]{roy-moran}
	A.E. Roy and P.E. Moran.
	\newblock {S}tudies in the application of recurrence relations to special
	perturbation methods.
	\newblock \emph{Celestial mechanics}, 7\penalty0 (2):\penalty0 236--255, 1973.
	
	\bibitem[Sagirow(1970)]{sagirow}
	Peter Sagirow.
	\newblock \emph{{S}tochastic methods in the dynamics of satellites}.
	\newblock Springer, 1970.
	
\end{thebibliography}
\end{document}